\def\namedlabel#1#2{\begingroup
   \def\@currentlabel{#2}%
   \label{#1}\endgroup
}
\newcommand{\ftstruct}{\tstruct}
\newenvironment{algo}[2]%
{\par\vspace{1em}\noindent\textbf{Algorithm~#1}~(\emph{#2}).~~}
{}
\newcommand{\ltlG}{\mathsf{G}}
\newcommand{\ltlU}{\mathsf{U}}
\newcommand{\ltlX}{\mathsf{X}}
\DeclareMathOperator{\LTL}{LTL}
\DeclareMathOperator{\FOL}{FO}
\newcommand{\LTLFO}{\ensuremath{\LTL^{\hbox{\tiny{$\FOL$}}}\xspace}}
\newcommand{\wrt}{wrt.\xspace}
\newcommand{\st}{s.t.\xspace}
\newcommand{\figref}[1]{Fig.~\ref{#1}}
\newcommand{\calA}{\mathcal{A}}
\newcommand{\calB}{\mathcal{B}}
\newcommand{\calF}{\mathcal{F}}
\newcommand{\calL}{\mathcal{L}}
\newcommand{\sign}{\Gamma}
\newcommand{\struct}{\mathfrak{A}}
\newcommand{\tstruct}{\overline{\struct}}
\newcommand{\dom}{|\struct|}
\newcommand{\sorts}{\mathbf{S}}
\newcommand{\funcs}{\mathbf{F}}
\newcommand{\Rels}{\mathbf{R}}
\newcommand{\rels}{\mathbf{I}}
\newcommand{\preds}{\mathbf{U}}
\newcommand{\vars}{\mathbf{V}}
\newcommand{\LS}{\mathcal{L}(\Gamma)}
\DeclareMathOperator{\events}{Ev}
\newcommand{\cA}{\mathcal{A}}
\newcommand{\cK}{\mathcal{K}}
\newcommand{\cL}{\mathcal{L}}
\newcommand{\cAphi}{\cA_\varphi}
\newcommand{\cAphiv}{\cA_{\varphi,v}}
\newcommand{\buchi}{B\"uchi\xspace}
\DeclareMathOperator{\cl}{cl}
\DeclareMathOperator{\subff}{sf}
\DeclareMathOperator{\depth}{depth}
\DeclareMathOperator{\Inf}{Inf}
\DeclareMathOperator{\good}{good}
\DeclareMathOperator{\bad}{bad}
\DeclareMathOperator{\AP}{AP}
\DeclareMathOperator{\vt}{\vec{t}}
\DeclareMathOperator{\vc}{\vec{c}}
\DeclareMathOperator{\vd}{\vec{d}}
\DeclareMathOperator{\vx}{\vec{x}}
\newcommand{\ddelta}{\delta_\rightarrow}
\newcommand{\spawn}{\delta_\downarrow}
\newcommand{\subf}{{\subff}|_\forall}
\begin{document}

\title{From propositional to first-order monitoring}

\author{Andreas Bauer$^{1,2}$ \and Jan-Christoph K\"uster$^{1,2}$ \and Gil Vegliach$^{1}$}

\institute{$^1$NICTA%
\thanks{NICTA is funded by the Australian Government
    as represented by the Department of Broadband, Communications and
    the Digital Economy and the Australian Research Council through the
    ICT Centre of Excellence program.} 
  Software Systems Research Group, $^2$Australian National University}

\maketitle

\begin{abstract}
  The main purpose of this paper is to introduce a first-order
  temporal logic, $\LTLFO$, and a corresponding monitor construction
  based on a new type of automaton, called spawning automaton.
  
  Specifically, we show that monitoring a specification in $\LTLFO$
  boils down to an undecidable decision problem.
  The proof of this result revolves around specific ideas on what we
  consider a ``proper'' monitor.  As these ideas are general, we
  outline them first in the setting of standard LTL, before lifting
  them to the 
  setting of first-order logic and $\LTLFO$.
  Although due to the above result one cannot hope to obtain a
  complete monitor for $\LTLFO$, we prove the soundness of our
  automata-based construction and give experimental results from an
  implementation.
  %
  These seem to substantiate our hypothesis that the automata-based
  construction leads to efficient runtime monitors whose size does not
  grow with increasing trace lengths (as is often observed in similar
  approaches).  However, we also discuss formulae for which growth is
  unavoidable, irrespective of the chosen monitoring approach.
\end{abstract}

\abovedisplayskip=0.2cm
\belowdisplayskip=0.2cm


\section{Introduction}
\label{sec:intro}
In the area of runtime verification (cf.\
\cite{havelund2004,Halle:2008:RMM:1437901.1438836,Dong:2008:IAR:1478278.1478320,bauer:leucker:schallhart:tosem}),
a \emph{monitor} typically describes a device or program which is
automatically generated from a formal specification capturing undesired
(resp.\ desired) system behaviour.  The monitor's task is to passively
observe a running system in order to detect if the behavioural specification
has been satisfied or violated by the observed system behaviour.
While, arguably, the majority of runtime verification approaches are
based on propositional logic, there exist works that also consider
first-order logic (cf.\
\cite{Halle:2008:RMM:1437901.1438836,bauer:gore:tiu:ictac09,Basin_etal:monitoring_mfotl}).
Monitoring first-order specifications has also gained prior attention in
the database community, especially in the context of so called temporal
triggers, which correspond to first-order temporal logic specifications
that are evaluated \wrt a linear sequence of database updates (cf.\
\cite{DBLP:journals/tods/Chomicki95,DBLP:journals/jcss/ChomickiN95,DBLP:journals/tkde/SistlaW95}).
Although the underlying logics are generally undecidable, the monitors
in these works usually address decidable problems, such as ``is the
observed behaviour so far a violation of a given specification
$\varphi$?''  Additionally, in many approaches, $\varphi$ must only ever
be a safety or domain independent property for this problem to actually
be decidable (cf.\
\cite{DBLP:journals/tods/Chomicki95,Basin_etal:monitoring_mfotl}), which
can be ensured by syntactic restrictions on the input formula, for
example.

As there exist many different ways in which a system can be monitored in
this abstract sense,
we are going to put forth very specific assumptions concerning the
properties and inner-workings of what we consider a ``proper'' monitor.
None of these assumptions is particularly novel or complicated, but they
help describe and distinguish the task of a ``proper'' monitor from that
of, say, a model checker, which can also be used to solve monitoring
problems as we shall see.

The two basic assumptions are easy to explain: Firstly, we demand that a
monitor is what we call \emph{trace-length independent}, meaning that
its efficiency does not decline with an increasing number of
observations.  Secondly, we demand that a monitor is \emph{monotonic}
\wrt reporting violations (resp.\ satisfication) of a specification,
meaning that once the monitor returns ``\textsf{SAT}'' to the user,
additional observations do not lead to it returning ``\textsf{UNSAT}''
(and vice versa).
We are going to postulate further assumptions, but these are mere
consequences of the two basic ones, and are explained in 
\S\ref{sec:prop_case}.

At the heart of this paper, however, is a custom first-order temporal
logic, in the following referred to as \LTLFO, which is undecidable.
%
Yet we outline a sound, albeit incomplete,
monitor construction for it based on a new type of automaton, called
spawning automaton.
$\LTLFO$ was originally developed for the specification of runtime
verification properties of Android ``Apps'' and has already been used in
that context (see \cite{bauer:kuester:vegliach:NFM12} for details).
%
%
%
Although \cite{bauer:kuester:vegliach:NFM12} gave a monitoring algorithm
for $\LTLFO$ based on formula rewriting, it turns out that the
automata-based construction given in this paper leads to practically
more efficient results. 

As our definition of what constitutes a ``proper'' monitor is not tied
to a particular logic we will develop it first for standard LTL
(\S\ref{sec:prop_case}), the quasi-standard in the area of runtime
verification.
%
In \S\ref{sec:ltlfo}, we give a more detailed account of $\LTLFO$ than
was available in \cite{bauer:kuester:vegliach:NFM12}, before we lift the
results of \S\ref{sec:prop_case} to the first-order setting
(\S\ref{sec:fo_case}).  The automata-based monitor construction for
$\LTLFO$ along with experimental results is described in
\S\ref{sec:mon}.
Related work is discussed in \S\ref{sec:related}.
Detailed proofs can be found in a separate appendix.



\section{Complexity of monitoring in the propositional case}
\label{sec:prop_case}

In what follows, we assume basic familiarity with $\LTL$ and topics like
model checking (cf.\ \cite{Baier:2008:PMC:1373322} for an overview).
Despite that, let us first state a formal $\LTL$ semantics, since we
will consider its interpretation on infinite and finite traces.  For
that purpose, let $\AP$ denote a set of propositions, $\LTL(\AP)$ the
set of well-formed LTL formulae over that set, and for some set $X$ set
$X^\infty = X^\omega \cup X^\ast$ to be the union of the set of all
infinite and finite traces over $X$.  When $\AP$ is clear from the
context, or does not matter, we use $\LTL$ instead of $\LTL(\AP)$.
Also, for a given trace $w = w_0 w_1 \ldots$, the trace $w^i$ is defined
as $w_iw_{i+1}\ldots$.
As a convention we use $u, u', \ldots$ to denote finite traces, by
$\sigma$ the trace of length $1$, and $w$ for infinite ones or where the
distinction is of no relevance.

\begin{definition}
  Let $\varphi \in \LTL(\AP)$, $w \in (2^{\AP})^\infty$ be a non-empty
  trace, and $i \in \mathbb{N}_0$, then
  \[
  \begin{array}{rcl}
    w^i \models p & \hbox{ iff } & p \in w_i, \hbox{ where } p \in \AP,\\
    w^i \models \neg\varphi & \hbox{ iff } & w^i \models \varphi \hbox{ does not hold},\\
    w^i \models \varphi \wedge \psi &\hbox{ iff } & w^i \models \varphi \hbox{ and } w^i \models \psi,\\
    w^i \models \ltlX\varphi & \hbox{ iff } & |w| > i \hbox{ and } w^{i+1} \models \varphi,\\
    w^i \models \varphi \ltlU \psi &\hbox{ iff } & \hbox{there is a } k \hbox{ s.t. } i \leq k \leq |w|, w^k \models \psi,
                    \hbox{ and for all } i \leq j < k, w^j \models \varphi.
  \end{array}
  \]
\end{definition}
And if $w^0 \models \varphi$ holds, we usually write $w \models \varphi$
instead.
Although this semantics, which was also proposed in
\cite{DBLP:conf/concur/MarkeyS03}, gives rise to mixed languages, i.e.,
languages consisting of finite and infinite traces, we shall only ever
be concerning ourselves with either finite-trace or infinite-trace
languages, but not mixed ones.  It is easy to see that over infinite
traces this semantics matches the definition of standard $\LTL$.
Recall, $\LTL$ is a decidable logic; in fact, the satisfiability problem
for $\LTL$ is known to be PSpace-complete
\cite{DBLP:journals/jacm/SistlaC85}.

As there are no commonly accepted rules for what 
qualifies as a monitor (not even in the runtime verification community),
there exist a myriad of different approaches to checking that an
observed 
behaviour satisfies (resp.\ violates) a formal specification, such as an
$\LTL$ formula.  Some of these (cf.\
\cite{havelund2004,bauer:gore:tiu:ictac09}) consist in solving the word
problem (see Definition~\ref{def:ltl:wp}).  A monitor following this
idea can either first record the entire system behaviour in
form of a trace $u \in \Sigma^+$, where $\Sigma$ is a finite alphabet of
events, or process the events incrementally as they are emitted by the
system under scrutiny.  Both approaches are documented in the literature
(cf.\
\cite{havelund2004,DBLP:conf/fm/GenonMM06,Halle:2008:RMM:1437901.1438836,bauer:gore:tiu:ictac09}),
but only the second one is suitable to detect property violations
(resp.\ satisfaction) right when they occur.
%

\begin{definition}
  \label{def:ltl:wp}
  The \emph{word problem for $\LTL$} is defined as follows.
  
  \noindent
  \textbf{Input:} A formula $\varphi \in \LTL(\AP)$ and some trace $u \in
  (2^{\AP})^+$.

  \noindent
  \textbf{Question:} Does $u \models \varphi$ hold?
\end{definition}
In \cite{DBLP:conf/concur/MarkeyS03} a bilinear algorithm for this
problem was presented (an even more efficient solution was recently
given in \cite{DBLP:journals/corr/abs-1210-0574}).
%
%
Hence, the first sort of monitor, which is really more of a test oracle
than a monitor, solves a classical decision problem.  The second
monitor, however, solves an entirely different kind of problem, which
cannot be stated in complexity-theoretical terms at all: its input is an
$\LTL$ formula and a finite albeit unbounded trace which \emph{grows}
incrementally.  This means that this monitor solves the word problem for
each and every new event that is added to the trace at runtime.  We can
therefore say 
that the word problem acts as a lower bound on the complexity of the
monitoring problem that such a monitor solves;
or, in other words, the problem that the online monitor solves is at
least as hard as the problem that the offline monitor solves.

There are approaches to build efficient (i.e., trace-length
independent) monitors that repeatedly answer the word problem (cf.\
\cite{havelund2004}).  However, such approaches violate our second basic
assumption, mentioned in the introduction of this paper, in that they
are necessarily non-monotonic.  To see this, consider $\varphi = a \ltlU
b$ and some trace $u = \{ a \} \{ a \} \ldots \{ a \}$ of length $n$.
Using our finite-trace interpretation, $u \not\models \varphi$.
However, if we add $u_{n+1} = \{b\}$, we get $u \models
\varphi$.\footnote{Note that this effect is not particular to our choice
  of finite-trace interpretation.  Had we used, e.g., what is known as
  the weak finite-trace semantics, discussed in
  \cite{DBLP:conf/cav/EisnerFHLMC03}, we would first have had $u \models
  \varphi$ and if $u_{n+1} = \emptyset$, subsequently $u
  \not\models\varphi$.}  For the user, this essentially means that she
cannot trust the verdict of the monitor as it may flip in the future,
unless of course it is obvious from the start that, e.g., only safety
properties are monitored and the monitor is built merely to detect
violations, i.e., bad prefixes.
However, if we take other monitorable languages into account as we do in
this paper, i.e., those that have either good or bad prefixes (or both),
we need to distinguish between satisfaction and violation of a property
(and want the monitor to report either occurrence truthfully).

\begin{definition}
  \label{def:prefixes}
  For any $L \subseteq \Sigma^\omega$, $u\in\Sigma^\ast$ is called a
  \emph{good prefix} (resp.\ \emph{bad prefix}) iff $u\Sigma^\omega
  \subseteq L$ holds (resp.\ $u\Sigma^\omega \cap L = \emptyset$).  
\end{definition}
We shall use $\good(L) \subseteq \Sigma^\ast$ (resp.\ $\bad(L)$) to
denote the set of good (resp.\ bad) prefixes of $L$.  For brevity, we
also write $\good(\varphi)$ instead of $\good(\calL(\varphi))$, and do
the same for $\bad(\calL(\varphi))$.

A monitor that detects good (resp.\ bad) prefixes has been termed
anticipatory in \cite{Dong:2008:IAR:1478278.1478320} as it not only
states something about the past, but also about the future: once a good
(resp.\ bad) prefix has been detected, no matter how the system would
evolve in an indefinite future, the property would remain satisfied
(resp.\ violated).  In that sense, anticipatory monitors are monotonic
by definition.  Moreover in \cite{bauer:leucker:schallhart:tosem}, a
construction is given, showing how to obtain trace-length independent
(even optimal) anticipatory monitors for $\LTL$ and a timed extension
called TLTL.  The obtained monitor basically returns $\top$ to the user
if $u \in \good(\varphi)$ holds, $\bot$ if $u \in \bad(\varphi)$ holds,
and $?$ otherwise.  Not surprisingly though, the monitoring problem such
a monitor solves is computationally more involved than the word problem.
It solves what we call the prefix problem (of $\LTL$), which can easily
be shown PSpace-complete by way of $\LTL$ satisfiability.

\begin{definition} \label{def:ltl:pp}
  The \emph{prefix problem for $\LTL$} is defined as follows.
  
  \noindent
  \textbf{Input:} A formula $\varphi \in \LTL(\AP)$ and some trace $u \in
  (2^{\AP})^\ast$.

  \noindent
  \textbf{Question:} Does $u \in \good(\varphi)$ (resp.\ $\bad(\varphi)$) hold?
\end{definition}

\begin{theorem}
  \label{thm:prop:prefix}
  The prefix problem for $\LTL$ is PSpace-complete.
\end{theorem}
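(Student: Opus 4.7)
The plan is to show both directions by reducing to LTL satisfiability, which is known to be PSpace-complete \cite{DBLP:journals/jacm/SistlaC85}, noting that PSpace is closed under complement.

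For the upper bound, given any finite trace $u = u_0 u_1 \cdots u_{n-1} \in (2^{\AP})^\ast$, I would first construct an LTL formula $\varphi_u$ whose infinite-trace models are exactly the infinite extensions of $u$, namely
\[
\varphi_u \;\equiv\; \bigwedge_{i=0}^{n-1} \ltlX^i \Bigl(\bigwedge_{p \in u_i} p \;\wedge\; \bigwedge_{p \in \AP \setminus u_i} \neg p\Bigr),
\]
where $\ltlX^i$ denotes $i$ nested applications of $\ltlX$. The size of $\varphi_u$ is polynomial in $|u|$ and $|\AP|$. By definition of good/bad prefixes, $u \in \bad(\varphi)$ iff no infinite extension of $u$ satisfies $\varphi$, which is equivalent to $\varphi_u \wedge \varphi$ being unsatisfiable over infinite traces; and $u \in \good(\varphi)$ iff every infinite extension satisfies $\varphi$, equivalent to $\varphi_u \wedge \neg\varphi$ being unsatisfiable. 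Both questions reduce to LTL unsatisfiability, which lies in PSpace.

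For the lower bound, I would reduce LTL satisfiability (resp.\ validity) to the prefix problem by instantiating $u$ as the empty trace $\varepsilon$. Then $\varepsilon \in \bad(\varphi)$ iff $\calL(\varphi) = \emptyset$, i.e., iff $\varphi$ is unsatisfiable, giving PSpace-hardness of the bad-prefix variant. Symmetrically, $\varepsilon \in \good(\varphi)$ iff every infinite trace satisfies $\varphi$, i.e., iff $\neg\varphi$ is unsatisfiable, yielding PSpace-hardness of the good-prefix variant.

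There is no real obstacle here; the only minor technical point is ensuring that $\varphi_u$ faithfully characterises the set of infinite extensions of $u$ and has polynomial size, so that the reduction is genuinely a polynomial-time many-one reduction. Everything else follows from the Sistla--Clarke result and closure of PSpace under complement.
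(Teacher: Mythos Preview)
Your proposal is correct and follows essentially the same approach as the paper: both directions go via LTL satisfiability, encoding the finite prefix as an LTL formula for membership and invoking closure of PSpace under complement. The only cosmetic difference is in the hardness reduction: the paper uses a length-one trace $\sigma$ together with the formula $\ltlX\varphi$ (so that $\sigma \notin \bad(\ltlX\varphi)$ iff $\calL(\varphi)\neq\emptyset$), whereas you use the empty trace directly, which is arguably cleaner; your explicit inclusion of negative literals in $\varphi_u$ is also a welcome clarification over the paper's shorthand.
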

\begin{proof}
  For brevity, we will only show the theorem for bad prefixes.
  It is easy to see that $u \in \bad(\varphi)$ iff $\calL(u_0 \wedge
  \ltlX u_1 \wedge \ltlX\ltlX u_2 \wedge \ldots \wedge \varphi) =
  \emptyset$.  Constructing this conjunction takes only polynomial time
  and the corresponding emptiness check can be performed in PSpace
  \cite{DBLP:journals/jacm/SistlaC85}.
  To show hardness, we proceed with a reduction of LTL satisfiability.
  Again, it is easy to see that $\calL(\varphi) \neq \emptyset$ iff
  $\sigma \not\in \bad(\ltlX\varphi)$ for any $\sigma \in 2^{\AP}$.  This
  reduction is linear, and as PSpace $=$ co-PSpace, the statement
  follows. \qed
\end{proof}

We would like to point out the possibility of building an
anticipatory though trace-length dependent $\LTL$ monitor using an ``off
the shelf'' model checker, which accepts a propositional Kripke
structure and an $\LTL$ formula as input.
Note that here we make the 
assumption that Kripke structures produce infinite as opposed to finite
traces.

\begin{definition}
  The \emph{model checking problem for $\LTL$} is defined as follows.
  
  \noindent
  \textbf{Input:} A formula $\varphi \in \LTL(\AP)$ and a Kripke
  structure $\cK$ over $2^{\AP}$.

  \noindent
  \textbf{Question:} Does $\calL(\cK) \subseteq \calL(\varphi)$ hold?
\end{definition}
As in $\LTL$ the model checking and the satisfiability problems are both
PSpace-complete \cite{DBLP:journals/jacm/SistlaC85}, we can use a model
checking tool as monitor: given that it is straightforward to construct
$\cK$ \st $\cL(\cK) = u(2^{\AP})^\omega$ in no more than polynomial
time, we return $\top$ to the user if $\cL(\cK) \subseteq \cL(\varphi)$
holds, $\bot$ if $\cL(\cK) \subseteq \cL(\neg\varphi)$ holds, and $?$ if
neither holds.
One could therefore be tempted to think of monitoring merely in terms of
a 
model checking problem, but we shall see that as soon as the logic in
question has an undecidable satisfiability problem this reduction
fails. 
Besides, it can be questioned whether monitoring as model checking leads
to a desirable 
monitor with its obvious trace-length dependence and having to
repeatedly solve a PSpace-complete problem for each new event.



\section{\LTLFO---Formal definitions and notation}
\label{sec:ltlfo}

Let us now introduce our first-order specification language $\LTLFO$ and
related concepts in more detail.
The first concept we need is that of a \emph{sorted first-order
  signature}, given as $\sign = (\sorts, 
\funcs, \allowbreak \Rels)$, where $\sorts$ is a finite non-empty set
of sorts, $\funcs$ a finite set of function symbols and $\Rels =
\preds \cup \rels$ a finite set of a priori uninterpreted and
interpreted predicate symbols, \st $\preds \cap \rels = \emptyset$ and
$\Rels \cap \funcs = \emptyset$.  The former set of predicate symbols
are referred to as $\preds$-operators and the latter as
$\rels$-operators. As is common, 0-ary functions symbols are also
referred to as constant symbols.
%
We assume that all operators in $\sign$ have a given arity that ranges
over the sorts given by $\sorts$, respectively.
We also assume an infinite supply of variables, $\vars$, that also range
over $\sorts$ and where $\vars \cap (\funcs \cup \Rels) =
\emptyset$.
Let us refer to the first-order language determined by $\sign$ as $\LS$.
While \emph{terms} in $\LS$ are made up of variables and function
symbols,
\emph{formulae} of $\LS$ are defined as follows:
$$
\varphi ::= p(t_1, \ldots, t_n) \mid 
r(t_1, \ldots, t_n) \mid 
\neg \varphi \mid
\varphi \wedge \varphi \mid
\ltlX\varphi \mid
\varphi \ltlU \varphi \mid
\forall (x_1, \ldots, x_n):p.\ \varphi,
$$
where $t_1, \ldots, t_n$ are terms, $p \in \preds$, $r \in \rels$, and
$x_1, \ldots, x_n \in \vars$.  
As variables are sorted, in the quantified formula $\forall (x_1,
\ldots, x_n):p.\ \varphi$, the $\preds$-operator $p$ with arity $\tau_1
\times \ldots \times \tau_{n}$, defines the sorts of variables $x_1,
\ldots, x_n$ to be $\tau_1, \ldots, \tau_{n}$, with $\tau_i \in \sorts$,
respectively.
For terms $t_1,\ldots,t_{n}$, we say that $p(t_1,\ldots,t_{n})$ is
well-sorted if the sort of every $t_i$ is $\tau_i$.
This notion is inductively applicable to terms.  Moreover, we consider
only well-sorted formulae and refer to the set of all well-sorted $\LS$
formulae over a signature $\sign$ in terms of $\LTLFO_\sign$.
When a specific $\sign$ is either irrelevant or clear from the context,
we will simply write $\LTLFO$ instead.
When convenient and a certain index is of no importance in the given
context, we also shorten notation of a vector $(x_1, \ldots, x_n)$ by a
(bold) $\vx$.

A $\Gamma$\emph{-structure}, or just \emph{first-order structure} is a
pair $\struct = (\dom, I)$, where
%
%
$\dom = \dom_1 \cup \ldots \cup \dom_n$,
is a non-empty set called domain, \st every sub-domain $\dom_i$ is
either a non-empty finite or countable set (e.g., set of all integers or
strings) and $I$ an interpretation.
$I$ assigns to each sort $\tau_i \in \sorts$ a specific sub-domain
$\tau_i^I = \dom_i$, 
to each constant symbol $c \in \funcs$ of sort $\tau_i$ a domain value
$c^I \in \dom_i$, to each function symbol $f \in \funcs$ of arity
$\tau_1 \times \ldots \times \tau_l \longrightarrow \tau_m$
a function $f^I: \dom_1 \times \ldots \times \dom_l \longrightarrow
\dom_m$, and to every $\rels$-operator $r$ with arity $\tau_1 \times
\ldots \times \tau_m$
a relation $r^I \subseteq \dom_1 \times \ldots \times \dom_m$.
%
We restrict ourselves to computable relations and functions.
In that regard, we can think of $I$ as a mapping between
$\rels$-operators (resp.\ function symbols) and the corresponding
algorithms which compute the desired return values, each conforming to
the symbols' respective arities.
Note that the interpretation of $\preds$-operators is rather different
from $\rels$-operators, as it is closely tied to what we call a trace
and therefore discussed in more detail after we introduce the necessary
notions and notation.
%

%

For the purpose of monitoring $\LTLFO$ specifications, we model observed
system behaviour in terms of \emph{actions}: Let $p \in \preds$ with
arity $\tau_1 \times \ldots \times \tau_m$ and $\vd \in D_p = \dom_1
\times \ldots \times \dom_m$, then we call $(p, \vd)$ an \emph{action}.
We refer to finite sets of actions as \emph{events}.
A system's behaviour over time is therefore a finite \emph{trace} of
events, which we also denote as a sequence of sets of ground terms $\{
sms(1234) \} \{ login(``\hbox{user}") \} \ldots$ when we mean the
sequence of tuples $\{ (sms, 1234) \}\allowbreak \{(login,
``\hbox{user}") \} \ldots$
Therefore the occurrence of some action $sms(1234)$ in the trace at
position $i \in \mathbb{N}_0$, written $sms(1234) \in w_i$, indicates
that at time $i$ it is the case that $sms(1234)$ holds (or, from a
practical point of view, an SMS was sent to number $1234$).  
We follow the convention that only symbols from $\preds$ appear in a
trace, which therefore gives these symbols their respective
interpretations.  The following formalises this notion.

A \emph{first-order temporal structure} is a tuple $(\tstruct, w)$,
where $\tstruct = (|\struct_0|, I_0) \allowbreak (|\struct_1|,
I_1)\allowbreak \ldots$ is a (possibly infinite) sequence of
first-order structures and $w = w_0w_1\ldots$ a corresponding trace.
We demand that for all $\struct_i$ and $\struct_{i+1}$ from $\tstruct$,
it is the case that $|\struct_i| = |\struct_{i+1}|$, for all $f \in
\funcs$, $f^{I_{i+1}} = f^{I_i}$, and for all $\tau \in \sorts$,
$\tau^{I_{i}} = \tau^{I_{i+1}}$.
%
For any two structures, $\struct$ and $\struct'$, which satisfy these
conditions, we write $\struct \sim \struct'$.
%
%
Moreover given some $\tstruct$ and $\struct$, if for
all $\struct_i$ from $\tstruct$, we have that $\struct_i \sim
\struct$, we also write $\tstruct \sim \struct$.
Finally, the interpretation of an $\preds$-operator $p$ with arity
$\tau_1 \times \ldots \times \tau_m$ is then defined \wrt a position $i$
in $w$ as $p^{I_i} = \{ \vd \mid (p, \vd) \in w_i \}$.  Essentially this
means that, unlike function symbols, $\preds$- and $\rels$-operators
don't have to be rigid.

Note also that from this point forward, we consider only the case where
the policy to be monitored is given as a closed formula, i.e., a
sentence.  This is closely related to our means of quantification:
a quantifier in $\LTLFO$ is restricted to those elements that appear in
the trace, and not arbitrary elements from a (possibly infinite) domain.
While certain policies cannot be expressed with this restriction (e.g.,
``for all phone numbers $x$ that are not in the contact list, $r(x)$ is
true''), this restriction bears the advantage that, when examining a
given trace, functions and relations are only ever evaluated over known
objects.
The advantages of this type of quantification in monitoring first-order
languages has also been pointed out in
\cite{Halle:2008:RMM:1437901.1438836,bauer:gore:tiu:ictac09}.
In other words, had we allowed free variables in policies, the monitor
might end up having to ``try out'' all the different domain elements in
order to evaluate such a policy, which runs counter to our design
rationale of quantification.

In what follows, let us fix a particular $\Gamma$. The semantics of
$\LTLFO$ can now be defined \wrt a quadruple $(\tstruct, w, v, i)$ as
follows, where $i \in \mathbb{N}_0$, and $v$ is an (initially empty) set
of valuations assigning domain values to variables:
{\allowdisplaybreaks
\addtolength{\jot}{-0.3em}
\begin{align*}
  (\tstruct, w, v, i) \models p(t_1, \ldots, t_n) & \hbox{~ iff ~}  (t_1^{I_i}, \ldots, t_n^{I_i}) \in p^{I_i}, \\
  (\tstruct, w, v, i) \models r(t_1, \ldots, t_n) & \hbox{~ iff ~}  (t_1^{I_i}, \ldots, t_n^{I_i}) \in r^{I_i} , \\ 
  (\tstruct, w, v, i) \models \neg \varphi       & \hbox{~ iff ~} (\tstruct, w, v, i) \models \varphi \hbox{ is not true},\\
  (\tstruct, w, v, i) \models \varphi \wedge \psi & \hbox{~ iff ~} (\tstruct, w, v, i) \models \varphi \hbox{ and } (\tstruct, w, v, i) \models \psi, \\
  (\tstruct, w, v, i) \models \ltlX\varphi       & \hbox{~ iff ~}  |w| > i \hbox{ and } (\tstruct, w, v, i + 1) \models \varphi,\\
  (\tstruct, w, v, i) \models \varphi \ltlU\psi  &\hbox{~ iff ~}  \hbox{for some } k \geq i, (\tstruct, w, v, k) \models \psi,\\
  & { ~ ~ }  \hbox{and }  (\tstruct, w, v, j) \models \varphi \hbox{ for all } i \leq j < k,\\
  (\tstruct, w, v, i) \models \forall (x_1, \ldots, x_n):p.\ \varphi & \hbox{~ iff ~} 
  \hbox{for all } (p, d_1, \ldots, d_n) \in w_i, \\
  & { ~ ~ }   (\tstruct, w, v \cup \{x_1\mapsto d_1, \ldots, x_n\mapsto d_n\}, i) \models \varphi,
\end{align*}}%
where terms are evaluated inductively, and $x^I$ means $v(x)$.
If $(\tstruct, w, v, 0) \models \varphi$, we write $(\tstruct, w, v)
\models \varphi$, and if a particular $v$ is irrelevant or clear from
the context, we shortcut the latter simply to $(\tstruct, w) \models
\varphi$.

Later we will also make use of the (possibly infinite) set of all events
\wrt $\struct$, given as
$(\struct)$-$\events = \bigcup_{p \in \preds} \{ (p, \vd) \mid \vd \in D_p \}$,
and take the liberty to omit the trailing $(\struct)$ whenever a
particular $\struct$ is either irrelevant or clear from the context.
We can then describe the \emph{generated language} of $\varphi$,
$\calL(\varphi)$ (or simply the language of $\varphi$, i.e., the set of
all logical models of $\varphi$) compactly as
$
\calL(\varphi) = \{ (\tstruct, w) \mid w_i \in 2^{\events} \hbox{ and } (\tstruct, w) \models \varphi \},
$ 
although, as before, we shall only ever concern ourselves with either
infinite- or finite-word languages, but not mixed ones.
Finally, we will use common syntactic ``sugar'', including $\exists
(x_1, \ldots, x_n):p.\ \varphi = \neg (\forall (x_1, \ldots, x_n):p.\
\neg \varphi)$, etc.


For brevity, we refer the reader to \cite{bauer:kuester:vegliach:NFM12}
for some example policies formalised in $\LTLFO$.  However, to give at
least an intuition, let's pick up the idea of monitoring Android
``Apps'' again,
and specify that 
``Apps'' must not send SMS messages to numbers not in a user's contact
database.  Assuming there exists an $\preds$-operator $sms$, which is
true / appears in the trace, whenever an ``App'' sends an SMS message to
phone number $x$, we could formalise said policy in terms of $\ltlG
\forall x:sms.\ contact(x)$.  Note how in this formula the meaning of
$x$ is given implicitly by the arity of $sms$ and must match the
definition of $contact$. 
Also note how $sms$ is interpreted indirectly via its occurrence in the
trace, whereas $contact$ never appears in the trace, even if true.
$contact$ can be thought of as interpreted via a program that queries a
user's contact database, whose contents may change over time.



\section{Complexity of monitoring in the first-order case}
\label{sec:fo_case}

$\LTLFO$ as defined above is undecidable as can be
shown by way of the following lemma whose detailed proof is available in
the appendix.  It basically helps us reduce finite satisfiability of
standard first-order logic to $\LTLFO$.

\begin{lemma}
  \label{lem:finmodel}
  Let $\varphi$ be a sentence in first-order logic, then we can
  construct a corresponding $\psi \in \LTLFO$ \st $\varphi$ has a finite
  model iff $\psi$ is satisfiable.
\end{lemma}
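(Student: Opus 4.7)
The plan is to reduce the finite satisfiability problem of (relational) first-order logic, which is undecidable by Trakhtenbrot's theorem, to $\LTLFO$ satisfiability. The high-level idea is to encode a finite first-order model as a \emph{single event} in the trace: the domain is represented by the payloads of a dedicated $\preds$-operator $\mathsf{dom}/1$ occurring in $w_0$, and each relation is represented by the payloads of a $\preds$-operator with the same arity. Since events are by definition finite sets of actions, any trace automatically yields a finite domain, which is exactly what we need.

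First, without loss of generality I assume $\varphi$ is in a purely relational signature with symbols $R_1, \ldots, R_k$ of arities $a_1, \ldots, a_k$ (function symbols can be eliminated in the usual way by replacing $f(\vx) = y$ with a graph predicate, and equality can be handled by treating $=$ as an $\rels$-operator with its standard interpretation, which is well-defined since we assume computable relations). I would then fix the $\LTLFO$ signature $\Gamma$ with a single sort $\tau$, $\preds = \{\mathsf{dom}/\tau,\ R_1/\tau^{a_1}, \ldots, R_k/\tau^{a_k}\}$, and (if equality occurs) $\rels = \{=\}$. Define the translation $\varphi \mapsto \widehat{\varphi}$ inductively on the structure of $\varphi$ by
\[
\widehat{R_i(\vx)} = R_i(\vx), \quad \widehat{\neg\alpha} = \neg\widehat{\alpha}, \quad \widehat{\alpha\wedge\beta} = \widehat{\alpha}\wedge\widehat{\beta}, \quad \widehat{\forall x.\,\alpha} = \forall x:\mathsf{dom}.\ \widehat{\alpha},
\]
and finally set $\psi := \widehat{\varphi} \wedge \forall x:\mathsf{dom}.\ \mathsf{true}$ (the conjunct is a harmless way to ensure that the formula refers to $\mathsf{dom}$, so that non-trivial traces are forced).

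For the forward direction, given a finite model $M = (D, R_1^M, \ldots, R_k^M)$ of $\varphi$, I take a trace of length one over an appropriate first-order temporal structure $(\overline{\struct}, w)$ where $|\struct_0| \supseteq D$ and
\[
w_0 = \{(\mathsf{dom}, d) \mid d \in D\} \cup \bigcup_{i=1}^{k}\{(R_i, \vd) \mid \vd \in R_i^M\},
\]
with $R_i^{I_0}$ fixed accordingly. A straightforward structural induction on $\alpha$ shows $M \models \alpha[v]$ iff $(\overline{\struct}, w, v, 0) \models \widehat{\alpha}$, using the crucial fact that $\LTLFO$-quantification over $\mathsf{dom}$ at position $0$ ranges precisely over $D$. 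Setting $\alpha = \varphi$ yields $(\overline{\struct}, w) \models \psi$. Conversely, given $(\overline{\struct}, w) \models \psi$, let $D := \{d \mid (\mathsf{dom}, d) \in w_0\}$, which is finite because $w_0$ is a finite set of actions; extract $R_i^M := \{\vd \in D^{a_i} \mid (R_i, \vd) \in w_0\}$. The same induction gives that $M = (D, R_1^M, \ldots, R_k^M)$ is a model of $\varphi$, which is therefore finitely satisfiable.

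The main obstacle I anticipate is not the induction itself but making sure the translation is faithful in the presence of equality and of quantifiers nested under temporal operators (although the latter does not arise since $\varphi$ is purely first-order, so $\widehat{\varphi}$ is evaluated entirely at position $0$). Equality needs to be an $\rels$-operator whose interpretation is the identity on $D \subseteq |\struct_0|$; one has to verify that restricting quantification to $\mathsf{dom}$ makes equality behave as in $M$, which follows because both sides of any equation encountered under $\widehat{\varphi}$ are variables bound by $\forall x:\mathsf{dom}$ and hence range only over $D$. Undecidability of $\LTLFO$ satisfiability (and hence of the natural monitoring problems over $\LTLFO$) then follows immediately from the lemma together with Trakhtenbrot's theorem.
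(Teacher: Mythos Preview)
Your approach is essentially the paper's: encode a finite first-order model in a single event by introducing a fresh unary $\preds$-operator for the domain and relativising all quantifiers to it. The paper keeps function symbols and instead conjoins closure conditions of the form $\forall x_1{:}d.\ldots\forall x_n{:}d.\ d(f(x_1,\ldots,x_n))$; your choice to pass to a relational signature first is a legitimate alternative and suffices for the intended application via Trakhtenbrot.

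There is, however, one concrete gap. The conjunct $\forall x{:}\mathsf{dom}.\ \true$ that you add is vacuously true for \emph{every} trace, so it does not ``force non-trivial traces'' as you claim. This breaks the backward direction. Take $\varphi = \forall x.\,(R(x)\wedge\neg R(x))$, which has no model at all. Its translation $\widehat{\varphi} = \forall x{:}\mathsf{dom}.\,(R(x)\wedge\neg R(x))$ is satisfied by any $(\overline{\struct},w)$ with $w_0$ containing no $\mathsf{dom}$-action; your extracted ``model'' then has empty universe, which is not a first-order structure. The fix is simply to conjoin $\exists x{:}\mathsf{dom}.\ \true$ (equivalently $\exists x{:}\mathsf{dom}.\ \mathsf{dom}(x)$) instead, which is exactly what the paper does. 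With that change your argument goes through; the structural induction you sketch is correct, and restricting the extracted relations to $D^{a_i}$ (rather than adding the paper's constraint $\forall(\vx){:}R_i.\ \bigwedge_j \mathsf{dom}(x_j)$) is an equally valid way to handle tuples outside the intended domain.
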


\begin{theorem}
  \label{thm:fo:undecidable}
  $\LTLFO$ is undecidable.
\end{theorem}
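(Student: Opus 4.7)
The plan is to invoke Lemma~\ref{lem:finmodel} together with a classical undecidability result about first-order logic, namely Trakhtenbrot's theorem, which states that the finite satisfiability problem for (relational) first-order logic is undecidable. From this the theorem is essentially immediate, and the only real work is to make sure the reduction provided by Lemma~\ref{lem:finmodel} is effective and preserves the right direction.

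Concretely, I would first fix the decision problem under consideration, namely: given $\psi \in \LTLFO$, decide whether there exists a first-order temporal structure $(\tstruct, w)$ with $(\tstruct, w) \models \psi$. Suppose for contradiction that some algorithm $\mathcal{A}$ decides this problem. Then for an arbitrary first-order sentence $\varphi$, I would apply Lemma~\ref{lem:finmodel} to compute a corresponding $\psi \in \LTLFO$ such that $\varphi$ has a finite model iff $\psi$ is satisfiable. Running $\mathcal{A}$ on $\psi$ would then decide whether $\varphi$ has a finite model, contradicting Trakhtenbrot's theorem. Hence no such $\mathcal{A}$ exists, and $\LTLFO$ is undecidable.

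A couple of small bookkeeping points would need to be checked to make the argument airtight. First, I would verify that the translation $\varphi \mapsto \psi$ supplied by Lemma~\ref{lem:finmodel} is computable in the usual Turing sense; this is where the ``can be constructed'' clause of the lemma is used, and the proof of the lemma (deferred to the appendix) should make this explicit. Second, I would be careful about which flavour of satisfiability for $\LTLFO$ we reduce to: since the paper's semantics interprets $\LTLFO$ over sequences of first-order structures with a constant domain and rigid function symbols, it matters that the lemma's construction yields a $\psi$ whose satisfiability genuinely corresponds to the existence of a \emph{finite} model of $\varphi$ (not, say, an arbitrary one); again, this is the content of Lemma~\ref{lem:finmodel} and needs no re-proof here.

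The main obstacle, insofar as there is one at this stage, is conceptual rather than technical: one must be sure that Trakhtenbrot's theorem applies to the fragment of classical first-order logic used on the left-hand side of the reduction. Since Lemma~\ref{lem:finmodel} is stated for arbitrary first-order sentences $\varphi$, and Trakhtenbrot's result holds already for purely relational signatures with at least one binary relation symbol, this is unproblematic. The theorem then follows in one line by combining the lemma with Trakhtenbrot.
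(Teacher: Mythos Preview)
Your proposal is correct and matches the paper's own approach exactly: the paper's proof idea reads ``Follows from Lemma~\ref{lem:finmodel} and Trakhtenbrot's Theorem.'' Your additional remarks on effectiveness of the translation and the applicable fragment for Trakhtenbrot are reasonable sanity checks, but the core argument is identical.
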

\begin{proof}[Idea]
  Follows from Lemma~\ref{lem:finmodel} and Trakhtenbrot's Theorem (cf.\
  \cite[\S9]{Libkin:2004:EFM:1024196}).
\end{proof}

Let us now define what we mean by Kripke structure in our new setting,
and the generated language of it.  
The Kripke structures we consider either give rise to infinite languages
(i.e., have a left-total transition relation), or represent traces (i.e,
are essentially linear structures).
For brevity, we shall restrict to the definition of the former.  Note
that we will also skip detailed redefinitions of the decision problems
discussed in \S\ref{sec:prop_case}, since the employed concepts transfer
in a straightforward manner.

\begin{definition}
  \label{def:kripke}
  Given some $\struct$, a \emph{($\struct$)-Kripke structure}, or just
  first-order Kripke structure, is a state-transition system $\cK = (S,
  s_0, \lambda, \rightarrow)$, where $S$ is a finite set of states, $s_0
  \in S$ a distinguished initial state, $\lambda: S \longrightarrow
  \widehat{\struct} \times \events$, where $\widehat{\struct}=\{\struct'
  \mid \struct' \sim \struct\}$, a labelling function, and $\rightarrow
  \subseteq S \times S$ a (left-total) transition relation.
\end{definition}
\begin{definition}
  \label{def:kripke-language}
  For a $(\struct)$-Kripke structure $\cK$ with states $s_0, \ldots,
  s_n$, the \emph{generated language} is given as
  $\calL(\cK) = \{ (\tstruct, w) \mid 
    (\struct_0, w_0) = \lambda(s_0) 
    \hbox{ and for all } i \in \mathbb{N} \hbox{ there exist } \allowbreak \hbox{some } j,k \in [ 0, n ] 
    \hbox{ s.t.\ } (\struct_i, w_i) = \lambda(s_j), (\struct_{i-1}, w_{i - 1}) = \lambda(s_{k}) \hbox{ and } 
    (s_{k}, s_j) \in \rightarrow \}.$
\end{definition}

The inputs to the $\LTLFO$ word problem are therefore an $\LTLFO$
formula and a linear first-order Kripke structure, representing a finite
input trace.  Unlike in standard $\LTL$, we note that

\begin{theorem}
  \label{lem:fo:word}
  The word problem for $\LTLFO$ is PSpace-complete.
\end{theorem}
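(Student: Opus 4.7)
The plan is to establish both directions separately, upper bound by a recursive evaluation procedure and lower bound by a reduction from QBF.

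For the \textbf{PSpace upper bound}, I would design a recursive evaluator $\mathrm{Eval}(\psi, i, v)$ that, given a subformula $\psi$, a position $i$ in $u$, and a valuation $v$ of the free variables of $\psi$, returns whether $(\tstruct, u, v, i) \models \psi$. The crucial observations are: (i) since the input trace $u$ is explicitly given, every domain element ever assigned by $v$ has polynomial bit-length (it came from some $w_j$); (ii) the semantics of $\forall(x_1,\ldots,x_n){:}p.\,\varphi$ only ranges over tuples $(p,\vd)\in w_i$, so the quantification domain is bounded by $|w_i|$; (iii) an $\ltlU$ in the subformula requires iterating $k$ from $i$ up to $|u|-1$, which is polynomial. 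At each recursive call we store only $(\psi, i, v)$, i.e.\ a pointer into $\varphi$, a position index, and a valuation of size $O(|\varphi|\cdot \log|u|\cdot s)$, where $s$ bounds the description length of one domain element. The recursion depth is bounded by $|\varphi|$, giving polynomial total space. I assume the predicates in $\rels$ and functions in $\funcs$ are themselves PSpace-computable, which is needed to evaluate the $r(t_1,\ldots,t_n)$ and term cases within the same space budget; this is the standard assumption implicit in the surrounding discussion.

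For \textbf{PSpace-hardness}, I would reduce from QBF. Given a closed QBF instance $Q_1 x_1 \cdots Q_n x_n\, \psi(x_1,\ldots,x_n)$ with $\psi$ a propositional formula, fix a signature containing a unary $\preds$-operator $\mathit{bit}$ and a unary $\rels$-operator $\mathit{one}$ whose interpretation is $\mathit{one}^I=\{1\}$. Construct the single-event trace $u = w_0$ with $w_0 = \{(\mathit{bit},0),(\mathit{bit},1)\}$ and the \LTLFO\ sentence
\[
\varphi \;=\; Q_1 x_1 : \mathit{bit}.\ \cdots\ Q_n x_n : \mathit{bit}.\ \psi^\ast,
\]
where $\psi^\ast$ is $\psi$ with each occurrence of the propositional variable $x_i$ replaced by $\mathit{one}(x_i)$ and existentials are unfolded via the syntactic sugar $\exists = \neg\forall\neg$. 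By the semantics of restricted quantification, $Q_i x_i : \mathit{bit}.\,\chi$ ranges exactly over $\{0,1\}$, and $\mathit{one}(x_i)$ is true iff $x_i=1$, so $(\tstruct,u)\models \varphi$ iff the QBF is true. The transformation is clearly polynomial.

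The main obstacle I expect is on the upper-bound side: one has to argue carefully that each recursive frame actually fits in polynomial space even in the presence of nested quantifiers that repeatedly extend $v$. The point is that along any one root-to-leaf path in the recursion, at most one binding per quantifier on that path is alive at a time, and quantifiers are nested at depth at most $|\varphi|$; iteration over the candidate tuples $(p,\vd)\in w_i$ and over witness positions for $\ltlU$ can be done sequentially with only an index stored, not a list. Everything else — PSpace closure under complement to handle $\neg$, and the equivalence with alternating polynomial time which also yields membership — is standard and I would just invoke it to keep the proof short.
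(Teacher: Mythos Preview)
Your proposal is correct and follows essentially the same approach as the paper: a recursive semantic evaluator of depth bounded by $|\varphi|$ for the upper bound, and a QBF reduction exploiting the restricted quantifier over a two-element event for hardness. The only cosmetic difference is that the paper encodes truth of a Boolean variable via a separate $\preds$-operator $p_{x_i}$ per variable (putting $(p_{x_i},1)$ in the single event), whereas you use one $\rels$-operator $\mathit{one}$ with interpretation $\{1\}$; both achieve the same ``$x_i=1$'' test and the reductions are interchangeable.
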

%
%
The inputs to the $\LTLFO$ model checking problem, in turn, are a
left-total first-order Kripke structure, which gives rise to an
infinite-trace language, and an $\LTLFO$ formula.
\begin{theorem}
  \label{lem:fo:mc}  
  The model checking problem for $\LTLFO$ is in
  ExpSpace.
\end{theorem}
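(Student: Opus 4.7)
The plan is to reduce $\LTLFO$ model checking over a finite first-order Kripke structure $\cK$ to propositional LTL model checking, incurring an exponential blow-up in the formula, and then to invoke the classical PSpace bound for propositional LTL model checking (cf.\ \cite{DBLP:journals/jacm/SistlaC85}).

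First, I would extract a finite domain from $\cK = (S, s_0, \lambda, \rightarrow)$. Since $S$ is finite and each $\lambda(s) = (\struct_s, w_s)$ carries a finite event $w_s$, the set $D$ of all values appearing in some $(p, \vd) \in w_s$, together with the denotations of the constants in $\funcs$, is finite with $|D|$ polynomial in $|\cK|$. Because $\funcs$ is rigid along any first-order temporal structure $\tstruct$ by definition, and restricted to computable functions, every ground term built over $D$ and $\funcs$ evaluates to a single well-defined value, independently of the position.

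Next, I would unfold the first-order quantifiers against $D$. The semantics of $\forall (x_1, \ldots, x_n):p.\ \varphi$ only quantifies over tuples $\vd$ with $(p, \vd) \in w_i$ at the current position; since any trace generated by $\cK$ exclusively exposes values from $D$, one can soundly rewrite that connective as $\bigwedge_{\vd \in D^n}(p(\vd) \to \varphi[\vx/\vd])$. Iterating the rewrite from outside in produces a quantifier-free formula $\varphi'$ whose only atoms are ground instances $p(\vd)$ and $r(\vt)$. With $q$ quantifier occurrences in $\varphi$, the size $|\varphi'|$ is polynomial in $|\cK|$ and exponential in $|\varphi|$. I would then propositionalise: each ground atom $p(\vd)$ becomes a fresh propositional symbol that is true in state $s$ iff $(p, \vd) \in w_s$, and each ground atom $r(\vt)$ becomes a fresh propositional symbol whose truth value in state $s$ is computed by evaluating $\vt$ via the rigid interpretation of $\funcs$ and testing the resulting tuple for membership in $r^{I_s}$, all decidable by the assumed computability of $\rels$. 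This yields a propositional Kripke structure $\cK'$ of size polynomial in $|\cK|$ over an alphabet of size exponential in $|\varphi|$, and by construction $\calL(\cK) \subseteq \calL(\varphi)$ iff $\calL(\cK') \subseteq \calL(\varphi')$ under the propositional LTL semantics.

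Finally, I would run the standard PSpace model-checking procedure on $\cK'$ and $\varphi'$. Since the combined input is exponential in the original input, the overall procedure runs in ExpSpace. The main obstacle, I expect, is the soundness of the quantifier unfolding: one must justify that restricting $\forall$ to $D$ rather than to the possibly infinite underlying sort-domains preserves truth on every trace generated by $\cK$. This follows from the event-bounded semantics of $\forall$ in $\LTLFO$ together with the fact that a finite $\cK$ never exposes any value outside $D$ on any of its traces. A secondary bookkeeping task is to verify carefully that the nested unfolding contributes only a single exponential, which holds because each quantifier contributes a factor of at most $|D|^{\text{ar}(p)}$ and quantifier nesting depth is bounded by $|\varphi|$.
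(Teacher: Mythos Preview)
Your high-level strategy matches the paper's: reduce $\LTLFO$ model checking over a finite first-order Kripke structure to propositional LTL model checking by grounding out the quantifiers (incurring a single-exponential blow-up in the formula) and then invoke the PSpace bound for propositional LTL. The grounding mechanism, however, is different. You unfold $\forall \vx:p.\ \psi$ uniformly as $\bigwedge_{\vd \in D^n}(p(\vd) \to \psi[\vx/\vd])$, ranging over the whole finite value set $D$ extracted from $\cK$ and using the atom $p(\vd)$ itself as guard. The paper instead introduces a fresh proposition $\tilde{s}$ per state and replaces the quantified subformula by $\bigwedge_{s \in S}\bigl(\tilde{s} \Rightarrow \bigwedge_{(p,\vd)\in\sigma_s} \psi[\vc_{\vd}/\vx]\bigr)$, unfolding only over the $p$-tuples actually present in each state's label; the propositional Kripke structure then makes exactly one $\tilde{s}$ true per state. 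Your route is the more textbook propositionalisation and avoids the state-marker machinery; the paper's route can yield a smaller grounded formula when each state carries few tuples, but both stay within a single exponential and hence in ExpSpace. One small wrinkle to tighten in your write-up: after substitution you may obtain $\preds$-atoms of the form $p(f(d))$ with genuine function applications, so your propositionalisation of $\preds$-atoms should, as you already do for $\rels$-atoms, first evaluate the ground term via the rigid $\funcs$-interpretation before testing membership in the state's event.
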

%
%
The reason for this result is that we can devise a reduction of the
$\LTLFO$ model checking problem to LTL model checking, using
exponential space.
While it is easy to obtain a PSpace-lower bound, for example via a
reduction of the $\LTLFO$ word problem, we currently do not know how
tight these bounds are and, therefore, leave this as an open problem.
Note also that the results of both Theorem~\ref{lem:fo:word} and
Theorem~\ref{lem:fo:mc} are obtained even without taking into account
the complexities of the interpretations of function symbols and
$\rels$-operators; that is, for these results to hold, we assume that
interpretations do not exceed polynomial, resp.\ exponential space.

We have seen in \S\ref{sec:prop_case} that the prefix problem lies at
the heart of an anticipatory monitor.  While in $\LTL$ it was possible
to build an anticipatory monitor using a model checker (albeit a very
inefficient one), Theorem~\ref{lem:fo:pf} shows that this is no longer
possible for $\LTLFO$.  Its proof makes use of the following
intermediate lemma.
\begin{lemma}
  \label{lem:fo:restricted}
  Let 
  $\struct$ be a first-order
  structure and $\varphi \in \LTLFO$, then
  $\cL(\varphi)_{\struct} = 
  \{ (\tstruct, w) \mid \tstruct \sim \struct, w \in (2^{\events})^\omega,
  \hbox{ and } (\tstruct, w) \models \varphi \}$.
  %
  Testing if $\cL(\varphi)_{\struct} \neq \emptyset$ is generally
  undecidable.
\end{lemma}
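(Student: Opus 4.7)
The plan is to lift Lemma~\ref{lem:finmodel} to the fixed-structure setting by arguing that the $\LTLFO$ formula $\psi$ obtained there from a first-order sentence $\varphi$ can always be satisfied by a temporal structure whose $\struct$-component is a single canonical $\struct_0$ chosen once and for all.

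First I would inspect the construction used in the proof of Lemma~\ref{lem:finmodel}. It encodes a prospective finite model of $\varphi$ directly into the trace: the intended domain is marked by the arguments of a distinguished $\preds$-operator, each relation symbol $R$ of $\varphi$ is mirrored by its own $\preds$-operator whose extension is read off a single trace position, and the first-order quantifiers of $\varphi$ translate into $\LTLFO$ quantifiers, which by definition range precisely over the tuples present in the trace at that position. The salient feature for our purposes is that $\psi$ probes only $\preds$-operators and requires no nontrivial interpretation of function symbols or $\rels$-operators.

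Second I would fix $\struct_0$ with domain $\mathbb{N}$, equipped with any computable interpretations for whatever function and $\rels$-symbols appear in the signature (these interpretations will never be queried by $\psi$). Under these choices, a satisfying pair $(\tstruct, w)$ with $\tstruct \sim \struct_0$ can be manufactured from any finite model of $\varphi$: represent the $n$-element domain by $\{1, \ldots, n\} \subseteq \mathbb{N}$, place the encoding of the model at position $0$, and pad $w$ with the empty event thereafter so that $w \in (2^{\events})^\omega$. Conversely, any $(\tstruct, w) \in \cL(\psi)_{\struct_0}$ yields, via the $\preds$-operator extensions at the encoded position, a finite model of $\varphi$. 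Trakhtenbrot's theorem then delivers undecidability of testing $\cL(\psi)_{\struct_0} \neq \emptyset$.

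The most delicate step is the second one: I must check that the Lemma~\ref{lem:finmodel} translation does not implicitly rely on function or $\rels$-symbols whose interpretation $\struct_0$ cannot faithfully supply, and that the finite encoded prefix can always be extended to an infinite trace over $\struct_0$ without spuriously forcing or falsifying subformulae of $\psi$ (in particular any subformulae guarded by $\ltlF$ or $\ltlG$). Once this faithfulness is verified, the reduction from finite first-order satisfiability carries over verbatim and undecidability transfers immediately to $\cL(\varphi)_{\struct} \neq \emptyset$.
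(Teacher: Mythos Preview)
Your approach works and is genuinely different from the paper's. The paper does not go through Lemma~\ref{lem:finmodel}; it reduces directly from Post's Correspondence Problem. Given an instance $K$, it takes the \emph{fixed} formula $\exists \gamma\!:\!z.\ pcp(\gamma)$ and encodes $K$ into the structure, setting $\struct = (\{0,1\}^+, I)$ with the $\rels$-operator $pcp$ interpreted as the computable predicate ``$u$ is a concatenation witnessing a solution of $K$''. So in the paper's reduction the undecidability is carried by the interpreted part of $\struct$ while the formula is trivial; your reduction is the dual, with a fixed structure-free $\struct_0$ and all the work in the formula $\psi$. The paper's proof is shorter; yours yields the slightly stronger statement that the problem is already undecidable for a single fixed $\struct_0$ with no nontrivial $\rels$-interpretations.

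One point must be tightened. Your claim that the Lemma~\ref{lem:finmodel} translation ``probes only $\preds$-operators'' is not correct as stated: if the input first-order sentence contains constant or function symbols, the translated $\psi$ retains them (it even conjoins $d(c)$ and the closure conditions $\forall x_1\!:\!d\ldots\forall x_n\!:\!d.\ d(f(x_1,\ldots,x_n))$), and those symbols are interpreted by the structure, not by the trace. Over a fixed $\struct_0$ you could then not realise an arbitrary finite interpretation of $f$, and the forward direction of your reduction would fail. The repair is easy and should be stated explicitly: restrict to a purely relational first-order signature (Trakhtenbrot's theorem already holds for a single binary relation), so that $\psi$ really mentions only $\preds$-operators. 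With that restriction your $\struct_0$ over $\mathbb{N}$ and the padding argument (encode the model at position $0$ and continue with $\emptyset^\omega$; $\psi$ has no temporal operators, so only position $0$ matters) go through cleanly.
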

\begin{proof}[Idea]
  By a reduction from Post's Correspondence Problem.
\end{proof}

\begin{theorem}
  \label{lem:fo:pf}  
  The prefix problem for $\LTLFO$ is undecidable.
\end{theorem}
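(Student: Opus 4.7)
My plan is to reduce the emptiness problem of Lemma~\ref{lem:fo:restricted} to the prefix problem for $\LTLFO$. Given a first-order structure $\struct$ and formula $\varphi$, I form a length-one prefix that pins down the structure, together with an $\ltlX$-guarded copy of $\varphi$. Specifically, let $u$ be the trace of length one whose only position carries the structure $\struct$ and the empty set of actions, i.e., $u = (\struct, \emptyset)$, and let $\psi := \ltlX\varphi$. The claim I intend to establish is that $u \in \bad(\psi)$ iff $\cL(\varphi)_{\struct} = \emptyset$. The forward direction prepends $(\struct, \emptyset)$ to any witness $(\tstruct', w')$ with $\tstruct' \sim \struct$ and $(\tstruct', w') \models \varphi$; since $\struct \sim \struct$ and $\sim$ is by definition the compatibility relation that must hold between consecutive structures of a temporal structure, the concatenated sequence is a legitimate infinite extension of $u$, and it satisfies $\ltlX\varphi$ at position $0$. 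Conversely, any extension of $u$ modelling $\psi$ can be shifted by one step to yield a model $(\tstruct^1, w^1)$ of $\varphi$ whose underlying temporal structure is $\sim$-equivalent to $\struct$, witnessing non-emptiness of $\cL(\varphi)_\struct$.

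For the good-prefix case, I would apply the very same reduction to $\neg\varphi$, observing that $u \in \good(\ltlX\varphi)$ iff every infinite extension of $u$ satisfies $\ltlX\varphi$ iff no $(\tstruct, w)$ with $\tstruct \sim \struct$ models $\neg\varphi$ iff $\cL(\neg\varphi)_\struct = \emptyset$. Combined with Lemma~\ref{lem:fo:restricted}, both variants of the prefix problem are therefore undecidable, and the reduction is plainly polynomial.

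The main obstacle I anticipate is justifying that a single-position prefix genuinely forces all infinite extensions to live over a $\sim$-equivalent first-order structure; this rests entirely on the $\LTLFO$ convention that consecutive structures in a temporal structure must agree on sorts and function interpretations, so fixing the first structure to $\struct$ propagates the constraint along the whole trace. A minor but important subtlety is the role of the extra $\ltlX$: it absorbs the artificial initial position and frees position~$1$ onwards to range over arbitrary events in $2^{\events}$, thereby faithfully mirroring the unconstrained trace quantified over in Lemma~\ref{lem:fo:restricted}.
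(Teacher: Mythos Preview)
Your proposal is correct and follows essentially the same route as the paper: both reduce the restricted emptiness problem of Lemma~\ref{lem:fo:restricted} to the bad-prefix question via a length-one prefix and the formula $\ltlX\varphi$, relying on the $\sim$-constraint between consecutive structures to propagate $\struct$ along every extension. The paper takes an arbitrary event $\sigma$ where you fix $\emptyset$, which is immaterial, and you additionally spell out the good-prefix variant via $\neg\varphi$, which the paper leaves implicit.
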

\begin{proof}[Idea]
  Similar to Theorem~\ref{thm:prop:prefix}:
  $(\struct, \sigma) \in \bad(\ltlX\varphi)$ iff $\calL(\varphi)_\struct
  = \emptyset$ for any $\sigma \in \events$.
\end{proof}



\section{Monitoring \LTLFO}
\label{sec:mon}

A direct consequence of Theorem~\ref{lem:fo:pf} is that there cannot
exist a complete monitor for $\LTLFO$-definable infinite trace
languages.  Yet one of the main contributions of our work is to show
that one can build a sound and efficient $\LTLFO$ monitor using a new
kind of automaton.  Before we go into the details of the actual
monitoring algorithm, let us first consider the automaton model, which
we refer to as \emph{spawning automaton} (SA).
SAs are called that, because when they process their input, they
potentially ``spawn'' a positive Boolean combination of ``children SAs''
(i.e., subautomata) in each such step.
Let $\calB^+(X)$ denote the set of all positive Boolean formulae over
the set $X$.  We say that some set $Y \subseteq X$ satisfies a formula
$\beta \in \calB^+(X)$, written $Y \models \beta$, if the truth
assignment that assigns true to all elements in $Y$ and false to all $X
- Y$ satisfies $\beta$.

\begin{definition}
  A \emph{spawning automaton}, or simply SA, is given by $\cA = (\Sigma,
  l, Q, Q_0, \ddelta, \allowbreak \spawn, \allowbreak \calF)$,
  where $\Sigma$ is a countable set called alphabet, $l \in
  \mathbb{N}_0$ the level of $\cA$, $Q$ a finite set of states, $Q_0
  \subseteq Q$ a set of distinguished initial states, $\ddelta$ a
  transition relation, $\spawn$ what is called a spawning function, and
  $\calF = \{ F_1, \ldots, F_n \mid F_i \subseteq Q\}$ an acceptance
  condition (to be defined later on).
  $\ddelta$ is given as $\ddelta: Q \times \Sigma \longrightarrow 2^Q$.
  The spawning function $\spawn$ is then given as $\spawn: Q \times
  \Sigma \longrightarrow \calB^+( \cA^{<l})$, where $\cA^{<l} = \{ \cA'
  \mid \cA' \hbox{ is an SA with level less than } l \}$.
\end{definition}

\begin{definition}  
  \label{def:run}
  A \emph{run} of 
  $\cA= (\Sigma, l, Q, Q_0, \ddelta, \spawn, \calF)$ over an input
  sequence $w \in \Sigma^\omega$ is a mapping $\rho: \mathbb{N}_0
  \longrightarrow Q$, \st $\rho(0) \in Q_0$ and $\rho(i + 1) \in
  \ddelta(\rho(i), w_i)$ for all $i \in \mathbb{N}_0$.
  $\rho$ is \emph{locally accepting} if $\Inf(\rho) \cap F_i \neq
  \emptyset$ for all $F_i \in \calF$, where $\Inf(\rho)$ denotes the set
  of states visited infinitely often.
  It is called \emph{accepting} if $l = 0$ and it is locally
  accepting.  If $l > 0$, $\rho$ is called accepting if it is locally
  accepting and for all $i \in \mathbb{N}_0$ there is a set $Y
  \subseteq \cA^{<l}$, \st $Y \models \spawn(\rho(i), w_i)$ and all
  automata $\cA' \in Y$ have an accepting run, $\rho'$, over $w^i$.
  %
  The \emph{accepted language} of $\cA$, $\calL(\calA)$, consists of
  all $w \in \Sigma^\omega$, for which it has at least one accepting
  run. 
\end{definition}
%



\subsection{Spawning automata construction}

Given some $\varphi \in \LTLFO$, let us now examine in detail how to
build the corresponding SA, $\cAphi = (\Sigma, l, Q, Q_0, \ddelta,
\spawn, \calF)$ \st $\calL(\cAphi) = \calL(\varphi)$ holds.
To this end, we set $\Sigma = \{ (\struct, \sigma) \mid \sigma \in
(\struct)$-$\events \}$.
%
If $\varphi$ is not a sentence, we write $\cAphiv$ to denote the
spawning automaton for $\varphi$ in which free variables are mapped
according to a finite set of valuations $v$.\footnote{Considering free
  variables, even though our runtime policies can only ever be
  sentences, is necessary, because an SA for a policy $\varphi$ is
  inductively defined in terms of SAs for its subformulae (i.e.,
  $\cAphi$'s subautomata), some of which may contain free variables.}
To define the set of states for an SA, we make use of a restricted
subformula function, $\subf(\varphi)$, which is defined like a generic
subformula function, except if $\varphi$ is of the form $\forall \vx:p.\
\psi$, we have $\subf(\varphi) = \{ \varphi \}$.
This essentially means that an SA for a formula $\varphi$ on the topmost
level looks like the \buchi automaton (BA, cf.\
\cite{Baier:2008:PMC:1373322}) for $\varphi$, where quantified
subformulae have been interpreted as atomic propositions.

For example, if $\varphi = \psi \wedge \forall \vx:p.\ \psi'$, where
$\psi$ is a quantifier-free formula, then $\cAphi$, at the topmost level
$n$, is like the BA for the LTL formula $\psi \wedge a$, where $a$ is an
atomic proposition; or in other words, $\cAphi$ handles the subformula
$\forall \vx:p.\ \psi'$ separately in terms of a subautomaton of level
$n - 1$ (see also definition of $\spawn$ below).

%
Finally, we define the closure of $\varphi$ \wrt $\subf(\varphi)$ as
$\cl(\varphi) = \{ \neg\psi \mid \psi \in \subf(\varphi) \} \cup \subf(\varphi)$,
%
i.e., the
smallest set containing $\subf(\varphi)$, which is closed under
negation.  

The \emph{set of states} of $\cAphi$, $Q$, consists of all complete
subsets of $\cl(\varphi)$;
that is, a set $q \subseteq \cl(\varphi)$ is complete iff
\begin{itemize}
\item for any $\psi \in \cl(\varphi)$ either $\psi \in q$ or $\neg\psi
  \in q$, but not both; and
\item for any $\psi \wedge \psi' \in \cl(\varphi)$, we have that $\psi
  \wedge \psi' \in q$ iff $\psi \in q$ and $\psi' \in q$; and
\item for any $\psi \ltlU \psi' \in \cl(\varphi)$, we have that if $\psi
  \ltlU \psi' \in q$ then $\psi' \in q$ or $\psi \in q$, and if $\psi
  \ltlU \psi' \not\in q$, then $\psi' \not\in q$.
\end{itemize}
Let $q \in Q$ and $\struct = (\dom, I)$. 
The \emph{transition function} $\ddelta(q, (\struct, \sigma))$ is defined
iff
\begin{itemize}
\item for all $p(\vt) \in q$, we have $\vt^I \in p^I$ and for all $\neg p(\vt) \in q$, we have $\vt^I \not\in p^I$,
\item for all $r(\vt) \in q$, we have $\vt^I \in r^I$ and for all $\neg r(\vt) \in q$, we have $\vt^I \not\in r^I$.
\end{itemize}
In which case, for any $q' \in Q$, we have that $q' \in \ddelta(q,
(\struct, \sigma))$ iff
\begin{itemize}
\item for all $\ltlX\psi \in \cl(\varphi)$, we have $\ltlX\psi \in q$ iff $\psi \in q'$, and
\item for all $\psi \ltlU \psi' \in \cl(\varphi)$, we have $\psi \ltlU \psi' \in
  q$ iff $\psi' \in q$ or $\psi \in q$ and $\psi \ltlU \psi' \in q'$.
\end{itemize}
This is similar to the well known syntax directed construction of BAs
(cf.\ \cite{Baier:2008:PMC:1373322}), except that we also need to cater
for quantified subformulae.  For this purpose, an inductive
\emph{spawning function} is defined as follows.  If $l > 0$, then
$\spawn(q, (\struct,\sigma))$ yields
\[
\left( \bigwedge_{\forall \vec{x}:p. \psi \in q} 
\left( \bigwedge_{(p,\vd) \in \sigma} \cA_{\psi,v'} \right) \right)
\wedge
\left( \bigwedge_{\neg\forall \vec{x}:p. \psi \in q} 
\left( \bigvee_{(p,\vd) \in \sigma} \cA_{\neg\psi,v''} \right) \right),
\]
where $v' = v \cup \{\vx \mapsto \vd\}$ and $v'' = v \cup \{\vx \mapsto
\vd\}$ are sets of valuations, otherwise $\spawn(q, (\struct, \sigma))$
yields $\top$.
%
%
Moreover, we set $Q_0 = \{ q \in Q \mid \varphi \in q \}$, $\calF = \{
F_{\psi \ltlU \psi'} \mid \psi \ltlU \psi' \in \cl(\varphi) \}$ with
$F_{\psi \ltlU \psi'} = \{ q \in Q \mid \psi' \in q \vee
\neg(\psi\ltlU\psi') \in q \}$, and $l = \depth(\varphi)$, where
$\depth(\varphi)$ is called the \emph{quantifier depth} of $\varphi$.
For some $\varphi \in \LTLFO$, $\depth(\varphi) = 0$ iff $\varphi$ is a
quantifier free formula.  The remaining cases are inductively defined as
follows: $\depth(\forall \vx:p.\ \psi) = 1 + \depth(\psi)$, $\depth(\psi
\wedge \psi') = \depth(\psi \ltlU \psi') = \max(\depth(\psi),
\depth(\psi'))$ and $\depth(\neg\varphi) = \depth(\ltlX\varphi) =
\depth(\varphi)$.



\begin{lemma}
  \label{lem:sa:correct}
  Let $\varphi \in \LTLFO$ (not necessarily a sentence) and $v$ be a
  valuation. For each accepting run $\rho$ in $\cAphiv$ over input
  $(\tstruct, w)$, $\psi \in \cl(\varphi)$, and $i \geq 0$, we have that
  $\psi \in \rho(i)$ iff $(\tstruct, w, v, i) \models \psi$.
\end{lemma}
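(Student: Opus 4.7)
\medskip\noindent\textbf{Proof plan.}
The plan is a nested induction: an outer induction on the quantifier depth $l = \depth(\varphi)$, and for each fixed $l$ an inner structural induction on $\psi \in \cl(\varphi)$. Because $\subf$ does not descend through quantifiers, every $\psi \in \cl(\varphi)$ is built (up to a top-level negation) from $\preds$- and $\rels$-atoms and, as atomic building blocks, quantified formulae $\forall \vx{:}p.\,\psi'$. The outer induction buys us that the lemma already holds for every spawned subautomaton $\cA_{\psi',v'}$, since by construction $\depth(\psi') < l$.

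In the inner induction I would first dispose of the easy cases. For $\psi = p(\vt)$ and $\psi = r(\vt)$ the claim follows directly from the side conditions that make $\ddelta$ defined at $(\rho(i), (\struct_i, w_i))$: these explicitly require $\vt^{I_i}\in p^{I_i}$ (resp.\ $\notin$) whenever $p(\vt)\in \rho(i)$ (resp.\ $\neg p(\vt)\in \rho(i)$), which is exactly the semantic clause. The Boolean cases $\neg$ and $\wedge$ follow from the ``complete subsets'' conditions on $Q$. The $\ltlX$ case is immediate from the clause for $\ltlX\psi \in \cl(\varphi)$ in the definition of $\ddelta$ together with the IH applied at position $i+1$.

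The two genuinely interesting steps are $\ltlU$ and the quantified atoms. For $\psi = \psi_1\ltlU\psi_2$, the ``only if'' direction uses the fairness set $F_{\psi_1\ltlU\psi_2}$: since $\rho$ is locally accepting it visits $F_{\psi_1\ltlU\psi_2}$ infinitely often, so from position $i$ some later state $k\ge i$ satisfies $\psi_2\in\rho(k)$ or $\neg(\psi_1\ltlU\psi_2)\in\rho(k)$; the $\ltlU$-clause of $\ddelta$ rules out the latter as long as $\psi_1\ltlU\psi_2$ persists, forcing $\psi_2\in\rho(k)$ and $\psi_1\in\rho(j)$ for all $i\le j<k$, whence the semantic $\ltlU$ clause follows by IH on $\psi_1,\psi_2$. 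The ``if'' direction is the dual, reading the $\ltlU$-clause of $\ddelta$ backwards and using the IH on the witness $k$ given by the semantics. This is the main obstacle, since it is where the acceptance condition (not just the transition structure) is used, and care is needed to ensure the persistence of $\psi_1\ltlU\psi_2$ in $\rho$ between $i$ and $k$.

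For a quantified atom $\psi = \forall \vx{:}p.\,\psi'$ I would invoke the outer IH. By the definition of $\spawn$, since $\rho$ is accepting there is some $Y\subseteq \cA^{<l}$ with $Y\models\spawn(\rho(i),(\struct_i,w_i))$ and each $\cA'\in Y$ has an accepting run on the suffix $(\tstruct^i,w^i)$. If $\psi\in\rho(i)$, the first conjunct of $\spawn$ forces $\cA_{\psi',v'}\in Y$ for every $(p,\vd)\in w_i$ with $v' = v\cup\{\vx\mapsto \vd\}$; applying the outer IH to each such $\cA_{\psi',v'}$ at position $0$ yields $(\tstruct,w,v',i)\models\psi'$ for every such $\vd$, which is the semantic clause for $\forall$. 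If $\neg\psi\in\rho(i)$, the second conjunct of $\spawn$ produces at least one witness $(p,\vd)\in w_i$ with $\cA_{\neg\psi',v''}$ accepting on $(\tstruct^i,w^i)$, giving by the outer IH a $\vd$ that falsifies $\psi'$, hence $\neg\psi$ holds semantically. The converse directions mirror this argument, and together with the complete-subset condition on $q$ yield the biconditional. I expect the $\ltlU$/quantifier interaction in verifying that the spawned runs are indeed over $w^i$ (and not some other suffix) to require the most bookkeeping, but no new ideas.
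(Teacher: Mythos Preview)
Your proposal is correct and follows essentially the same approach as the paper: a nested induction on $\depth(\varphi)$ with an inner structural induction on $\psi\in\cl(\varphi)$, where the atomic/Boolean/$\ltlX$ cases use the side conditions on $\ddelta$ and completeness of states, the $\ltlU$ case uses the generalised B\"uchi fairness set $F_{\psi_1\ltlU\psi_2}$ for the forward direction and backward propagation via the $\ltlU$-clause of $\ddelta$ for the converse, and the quantified atom is handled through $\spawn$ together with the outer induction hypothesis applied to the accepting runs of the subautomata on $(\tstruct^i,w^i)$. The only minor point you leave implicit is that $\psi'\in\rho'(0)$ for the spawned run (which holds because $Q_0=\{q\mid \psi'\in q\}$), and that when $\neg\forall\vx{:}p.\,\psi'\in\rho(i)$ but no $(p,\vd)\in w_i$, the empty disjunction in $\spawn$ yields $\bot$, contradicting acceptance of $\rho$; both are routine and the paper spells them out in exactly this way.
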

\begin{proof}[Idea]
  By nested induction on $\depth(\varphi)$ and the structure of $\psi
  \in \cl(\varphi)$.
\end{proof}

\begin{theorem}
  \label{thm:sa:correct}
  The constructed SA is correct in the sense that for any sentence
  $\varphi \in \LTLFO$, we have that $\calL(\cA_\varphi) =
  \calL(\varphi)$.
\end{theorem}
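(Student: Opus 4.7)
The plan is to prove both inclusions of $\calL(\cAphi)=\calL(\varphi)$ by induction on $\depth(\varphi)$, relying on Lemma~\ref{lem:sa:correct} for the easy direction and on an explicit canonical-run construction for the hard one. For the proof to go through inductively at quantified subformulae, I would actually establish the more general statement that $\cA_{\psi,v}$ accepts $(\tstruct,w)$ iff $(\tstruct,w,v)\models\psi$, for all $\psi\in\LTLFO$ and all valuations $v$ defined on the free variables of $\psi$; the theorem is then the special case where $\psi=\varphi$ is a sentence and $v=\emptyset$.

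For $\calL(\cAphi)\subseteq\calL(\varphi)$, suppose $(\tstruct,w)\in\calL(\cAphi)$. Then by Definition~\ref{def:run} there is an accepting run $\rho$ with $\rho(0)\in Q_0$, so $\varphi\in\rho(0)$. Applying Lemma~\ref{lem:sa:correct} at $i=0$, $\psi=\varphi$ yields $(\tstruct,w,0)\models\varphi$, i.e.\ $(\tstruct,w)\in\calL(\varphi)$. This direction needs no further induction beyond whatever Lemma~\ref{lem:sa:correct} already provides.

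For $\calL(\varphi)\subseteq\calL(\cAphi)$, given $(\tstruct,w)\models\varphi$ I would define the canonical candidate run by $\rho(i)=\{\psi\in\cl(\varphi)\mid(\tstruct,w,i)\models\psi\}$ and then verify the conditions of Definition~\ref{def:run} in turn. Completeness of each $\rho(i)$, so that $\rho(i)\in Q$, is forced by the two-valuedness of the semantics together with the standard unfolding $\psi\ltlU\psi'\equiv \psi'\vee(\psi\wedge\ltlX(\psi\ltlU\psi'))$. The initial condition $\rho(0)\in Q_0$ is immediate since $\varphi\in\rho(0)$. The transition clause $\rho(i+1)\in\ddelta(\rho(i),(\struct_i,w_i))$ checks clause by clause: the $p$- and $r$-atom side-conditions match the semantics of atomic formulae under $I_i$, the $\ltlX$-clause is direct from the semantics, and the $\ltlU$-clause is the fixed-point unfolding above. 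For the local \buchi condition, fix $\psi\ltlU\psi'\in\cl(\varphi)$: if $\psi\ltlU\psi'\in\rho(i)$ for infinitely many $i$, the semantics forces $\psi'$ to hold cofinally often, placing $\rho$ in $F_{\psi\ltlU\psi'}$ infinitely often; otherwise $\neg(\psi\ltlU\psi')\in\rho(i)$ cofinally, which also lies in $F_{\psi\ltlU\psi'}$.

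The main obstacle, and where the induction on $\depth(\varphi)$ is essential, is discharging the spawning condition. For every $i\geq 0$ I need to exhibit $Y\subseteq\cA^{<l}$ with $Y\models\spawn(\rho(i),(\struct_i,w_i))$ such that every $\cA'\in Y$ has an accepting run on the suffix of $(\tstruct,w)$ starting at position $i$. For each $\forall\vx{:}p.\ \psi\in\rho(i)$ and each $(p,\vd)\in w_i$, the semantics of the universal quantifier gives $(\tstruct,w,\{\vx\mapsto\vd\},i)\models\psi$; since $\depth(\psi)<\depth(\varphi)=l$, the inductive hypothesis applied to $\cA_{\psi,v'}$ with $v'=\{\vx\mapsto\vd\}$ yields an accepting run of $\cA_{\psi,v'}$ on that suffix. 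Dually, for each $\neg\forall\vx{:}p.\ \psi\in\rho(i)$ the semantics furnishes at least one witness $(p,\vd)\in w_i$ with $(\tstruct,w,v'',i)\models\neg\psi$, and the inductive hypothesis supplies an accepting run of $\cA_{\neg\psi,v''}$. Collecting all these subautomata into $Y$ satisfies the conjunction-of-conjunctions-and-disjunctions defining $\spawn(\rho(i),(\struct_i,w_i))$. The base case $l=0$ is vacuous since $\spawn$ returns $\top$, and the construction then coincides with the well-known syntax-directed \buchi automaton for quantifier-free LTL, completing the induction.
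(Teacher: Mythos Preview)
Your proposal is correct and mirrors the paper's proof almost exactly: the $\subseteq$-direction via Lemma~\ref{lem:sa:correct} at $i=0$, and the $\supseteq$-direction via the canonical run $\rho(i)=\{\psi\in\cl(\varphi)\mid(\tstruct,w,v,i)\models\psi\}$ together with an induction on $\depth(\varphi)$ to discharge the spawning obligations by exhibiting the witnessing set $Y_i$ just as you describe. The only minor slip is that the extended valuations in your spawning step should read $v'=v\cup\{\vx\mapsto\vd\}$ rather than $\{\vx\mapsto\vd\}$, in line with the generalised statement you set up at the outset.
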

\begin{proof}[Idea]
  $\subseteq$ by Lemma~\ref{lem:sa:correct}.  The other direction uses
  induction on $\depth(\varphi)$.
\end{proof}

\subsection{Monitor construction}

Before we look at the actual monitor construction, let us first
introduce some additional concepts and notation:
%
For a finite run $\rho$ in $\cA_{\varphi}$ over $(\tstruct, u)$, we
call $\spawn(\rho(j), (\struct_j, u_j)) = obl_j$ an
\emph{obligation}, where $0 \leq j < |u|$,
in that $obl_j$ represents the language to be satisfied after $j$
inputs.  That is, $obl_j$ refers to the language represented by the
positive Boolean combination of spawned SAs.
%
We say it is \emph{met} by the input, if $(\tstruct^j, u^j) \in
\good(obl_j)$ and \emph{violated} if $(\tstruct^j, u^j) \in
\bad(obl_j)$.
%
Furthermore, $\rho$ is called \emph{potentially locally accepting}, if
it can be extended to a run $\rho'$ over $(\tstruct, u)$ together with
some infinite suffix, such that $\rho'$ is locally accepting.


The monitor for a given formula $\varphi \in \LTLFO$ can now be described
in terms of two mutually recursive algorithms:
The main entry point is Algorithm~M.  It reads an event and issues two
calls to a separate Algorithm~T, one for $\varphi$ (under a possibly
empty valuation $v$) and one for $\neg\varphi$ (under a possibly empty
valuation $v$).
The purpose of Algorithm~T is to detect bad prefixes \wrt the language
of its argument formula, call it $\psi$.  It does so by keeping track of
those finite runs in $\cA_{\psi,v}$ that are potentially locally
accepting and where its obligations haven't been detected as violated by
the input.
If at any time not at least one such run exists, then a bad prefix has
been encountered.
Algorithm~T, in turn, uses Algorithm~M to evaluate if 
obligations of its runs are met or violated by the input observed so far
(i.e., it inductively creates submonitors): after the $i$th input, it
instantiates Algorithm~M with argument $\psi'$ (under corresponding
valuation $v'$) for each $\cA_{\psi',v'}$ that occurs in $obl_i$ and
forwards to it all observed events from time point $i$ on.

\begin{algo}{M}{Monitor}
  The algorithm takes a $\varphi \in \LTLFO$ (under a possibly empty
  valuation $v$).  Its abstract behaviour is as follows:
  Let us assume an initially empty first-order temporal structure
  $(\tstruct, u)$.
  Algorithm~M reads an event $(\struct, \sigma)$, prints ``$\top$'' if
  $(\tstruct\struct, u\sigma) \in \good(\varphi)$ (resp.\ ``$\bot$'' for
  $\bad(\varphi)$), and returns.  Otherwise it prints ``$?$'', whereas
  we now assume that $(\tstruct, u) = (\tstruct\struct,u\sigma)$
  holds.\footnote{Obviously, the monitor does not really keep
    $(\tstruct, u)$ around, or it would be necessarily trace-length
    dependent.  $(\tstruct, u)$ is merely used here to explain the inner
    workings of the monitor.}
  \setlist{nosep}
  \begin{description}[style=multiline,leftmargin=0.73cm]  
  \item [M1.] [Create instances of Algorithm~T.] Create two instances of
    Algorithm~T: one with $\varphi$ and one with $\neg\varphi$, and call
    them $T_{\varphi,v}$ and $T_{\neg\varphi, v}$, respectively.
  \item [M2.\namedlabel{itm:read_event}{M2}] [Forward next event.]
    Wait for next event $(\struct, \sigma)$ and forward it to
    $T_{\varphi,v}$ and $T_{\neg\varphi,v}$.
  \item [M3.\namedlabel{algo:m:verdict}{M3}] [Communicate verdict.] If
    $T_{\varphi,v}$ sends ``no runs'', print $\bot$ and return.  If
    $T_{\neg\varphi,v}$ sends ``no runs'', print $\top$ and
    return. Otherwise, print ``?'' and go to \ref{itm:read_event}.
    \hspace{1em} \Rectangle
  \end{description}
\end{algo}

\begin{algo}{T}{Track runs}
  The algorithm takes a $\varphi \in \LTLFO$ (under a corresponding
  valuation $v$), for which it creates an SA, $\cA_{\varphi,v}$.  It
  then reads an event $(\struct, \sigma)$ and returns, if
  $\cA_{\varphi,v}$, after processing $(\struct, \sigma)$, does not have
  any potentially locally accepting runs, for which
  obligations haven't been detected as violated. Otherwise, it saves the new state
  of $\cA_{\varphi,v}$, waits for new input, and then checks again, and
  so forth.
  \setlist{nosep}
  \begin{description}[style=multiline,leftmargin=0.8cm]
  \item[\,\,\,T1.] [Create SA.] Create an SA, $\cA_{\varphi,v}$, in the
    usual manner.
  \item[\,\,\,T2.\namedlabel{itm:read}{T2}] [Wait for new event.] Let
    $(\struct, \sigma)$ be the event that was read.
  \item[\,\,\,T3.\namedlabel{algo:r:buffer}{T3}] [Buffer with runs.]
    Let $B$ and $B'$ be (initially empty) buffers.  If $B =
    \emptyset$, for each $q \in Q_0$ and for each $q' \in \ddelta(q,
    (\struct, \sigma))$: add $(q', [\spawn(q, (\struct, \sigma))])$ to
    $B$.
    Otherwise, set $B' = B$, and subsequently $B = \emptyset$. Next,
    for all $(q, [obl_1, \ldots, obl_n])\allowbreak \in B'$ and for
    all $q' \in \ddelta(q, (\struct,\sigma))$: add
    $(q',[obl_{\textsl{new}}, obl_1,\ldots, \allowbreak obl_n])$ to
    $B$, where $obl_{\textsl{new}}=\spawn(q, (\struct, \sigma))$.
  \item[\,\,\,T4.\namedlabel{itm:init_submonitors}{T4}] [Create
    submonitors.]  For each $(q, [obl_{\textsl{new}}, obl_1 \ldots,
    obl_n]) \in B$: call Algorithm~M with argument $\psi$ (under
    corresponding valuation $v'$) for each $\cA_{\psi,v'}$ that occurs
    in $obl_{\textsl{new}}$.
  \item[\,\,\,T5.\namedlabel{itm:iterate_buf}{T5}] [Iterate over
    candidate runs.]  Assume $B =\{ b_0, \ldots, b_m \}$. Create a
    counter $j = 0$ and set $(q, [obl_0, \ldots, obl_n]) = b_j$ to be
    the $j$th element of $B$.
  \item[\,\,\,T6.\namedlabel{itm:send}{T6}] [Send, receive, replace.]
    For all $0 \leq i \leq n$: send $(\struct, \sigma)$ to all
    submonitors corresponding to SAs occurring in $obl_i$, and wait
    for the respective verdicts.  For every returned $\top$ (resp.\
    $\bot$) replace the corresponding SA in $obl_i$ with $\top$
    (resp.\ $\bot$).
  \item[\,\,\,T7.\namedlabel{itm:rm_seq}{T7}] [Corresponding run has
    violated obligations?]
    For all $0 \leq i \leq n$: if $obl_i = \bot$, remove $b_j$ from
    $B$, set $j$ to $j + 1$, and go to \ref{itm:send}.
  \item[\,\,\,T8.\namedlabel{algo:r:rm_obl}{T8}] [Obligations met?]
    For all $0 \leq i \leq n$: if $obl_i = \top$, remove
    $obl_i$. 
  \item[\,\,\,T9.\namedlabel{itm:next_buf}{T9}] [Next run in buffer.]
    If $j \leq m$, set $j$ to $j+1$ and repeat step \ref{itm:send}.
  \item[T10.\namedlabel{itm:result}{T10}] [Communicate verdict.]  If $B
    = \emptyset$, send ``no runs'' to the calling Algorithm~M and return,
    otherwise send ``some run(s)'' and go back to \ref{itm:read}.
    %
    \hspace{1em}\Rectangle
  \end{description}
  \vspace{1em}
\end{algo}

\noindent
For a given $\varphi \in \LTLFO$ and $(\tstruct, u)$, let us use 
$M_\varphi(\tstruct, u)$ to denote the successive application of
Algorithm~M for formula $\varphi$ first on $u_0$, then $u_1$, and so
forth.  We then get

\begin{theorem} 
  \label{thm:mon:correct}
  $M_{\varphi}(\tstruct, u) = \top \Rightarrow$ $(\tstruct, u) \in
  \good(\varphi)$ (resp.\ for $\bot$ and $\bad(\varphi)$).
\end{theorem}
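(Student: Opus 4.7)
The plan is to prove soundness by induction on the quantifier depth $l = \depth(\varphi)$. I focus on the $\top$ case; the $\bot$ case is symmetric, since Algorithm~M handles $\varphi$ and $\neg\varphi$ entirely in parallel with the roles swapped. By inspection of step~M3, $M_\varphi(\tstruct, u) = \top$ only when $T_{\neg\varphi, v}$ sends ``no runs'' after some prefix $u'$ of $u$, which by step~T10 means its buffer $B$ has become empty at that point.

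The central lemma I would establish is a buffer invariant for Algorithm~T instantiated with formula $\psi$ and valuation $v$: after processing a finite trace $u'$ of length $n$, a pair $(q, [obl_{n-1}, \ldots, obl_0])$ is in $B$ iff there exists a finite run $\rho$ of $\cA_{\psi,v}$ over $(\tstruct, u')$ with $\rho(n) = q$ that is potentially locally accepting, and for which, for every $j < n$, the positive Boolean obligation $obl_j = \spawn(\rho(j), (\struct_j, u'_j))$ has not yet been simplified to $\bot$ by the $\top/\bot$ verdicts returned so far by the submonitors spawned on its constituent SAs. Hence $B = \emptyset$ iff no such ``live'' candidate run survives. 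The $\subseteq$ direction of this invariant is a direct trace through steps~T3--T8, while $\supseteq$ follows because every reachable $\ddelta$-successor is enumerated at T3 and no bookkeeping step ever discards a still-satisfiable obligation.

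From the invariant I would deduce that $B = \emptyset$ after $u'$ implies that $\cA_{\psi,v}$ has no accepting run over any infinite extension $(\tstruct\tstruct'', u' w'')$ with $w'' \in \Sigma^\omega$. The argument splits: either no transition path over $u'$ is potentially locally accepting---in which case no infinite continuation can fulfil the generalised B\"uchi condition $\calF$, by the standard propositional argument applied to the quantifier-free skeleton---or every otherwise-live path carries an obligation that has collapsed to $\bot$. For the latter, the inductive hypothesis applies: each spawned $\cA_{\psi',v'}$ has strictly smaller quantifier depth, so its submonitor is already sound, meaning each $\bot$ verdict genuinely certifies that $\cA_{\psi',v'}$ has no accepting run over any extension of the suffix from position $j$; thus the positive Boolean combination $obl_j$ is unsatisfiable against every continuation, and no accepting run of $\cA_{\psi,v}$ can pass through $\rho(j)$ at time $j$. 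Applying this with $\psi = \neg\varphi$ and invoking Theorem~\ref{thm:sa:correct} yields that no infinite extension of $(\tstruct, u')$ lies in $\calL(\neg\varphi)$, i.e., $u' \in \good(\varphi)$; since $\good(\cdot)$ is closed under extension, $(\tstruct, u) \in \good(\varphi)$ follows.

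The principal obstacle I anticipate is formalising the buffer invariant rigorously: it must simultaneously track the generalised B\"uchi acceptance prospects encoded by ``potentially locally accepting'' and the status of an unbounded, growing list of past obligations whose verdicts arrive asynchronously from nested submonitors at lower levels. Once this bookkeeping is cleanly stated, the inductive step reduces to a routine appeal to submonitor soundness at depth $l - 1$ together with Theorem~\ref{thm:sa:correct}.
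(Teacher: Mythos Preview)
Your proposal is correct and follows essentially the same approach as the paper: induction on $\depth(\varphi)$, with the inductive step relying on soundness of the submonitors (which have strictly smaller depth) together with Lemma~\ref{lem:sa:correct}/Theorem~\ref{thm:sa:correct} to conclude that an obligation collapsing to $\bot$ genuinely rules out the corresponding run. The paper presents the same argument by direct contradiction---assume an accepting run of $\cA_{\neg\varphi,v}$ over some extension exists and show its prefix survives in $B$---whereas you package this as an explicit buffer invariant; the underlying reasoning is identical. One small caution: your invariant is stated as an ``iff'', but only the $\supseteq$ direction (every live candidate run appears in $B$) is needed for soundness, and only that direction is argued in the paper; the converse would amount to a form of completeness, which Theorem~\ref{lem:fo:pf} already tells you cannot hold in general.
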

\begin{proof}[Idea]
  By nested induction over $\depth(\varphi)$ and the length of
  $(\ftstruct, u)$.
\end{proof}


\subsection{Experimental results}

To demonstrate the feasibility of our proposed algorithm and to
get an intuition on its runtime performance (i.e., average space
consumption at runtime), we have implemented the above.
The only liberty we took in deviating from our description is the
following: 
since the SAs for $\varphi \in \LTLFO$ on the different levels basically
consist of ordinary BAs 
for the respective subformulae of $\varphi$, we have used an ``off the
shelf'' BA 
generator,
lbt\footnote{http://www.tcs.hut.fi/Software/maria/tools/lbt/}, instead
of expanding the state-space ourselves.
%
We also compared our implementation with the somewhat naive (but,
arguably, 
easier to implement) approach of monitoring $\LTLFO$ formulae, described
in \cite{bauer:kuester:vegliach:NFM12}. 
%
There,
we 
used the well-known concept of formula rewriting, sometimes referred to
as progression:
a function, $P$, continuously ``rewrites'' a formula $\varphi \in
\LTLFO$ using an observed event, $\sigma$, in order to obtain a new
formula, $\varphi'$, that states what has to be true now and what in the
future.  If $\varphi' = \top$, then $\sigma \in \good(\varphi')$, if
$\varphi' = \bot$ then $\sigma \in \bad(\varphi')$, otherwise the
thereby realised monitor waits for further events to apply its
progression function to.  $P$ rewrites according to the well-known
fixpoint characterisations of $\LTL$ operators, such as $P(\ltlG\varphi,
\sigma) = P(\varphi,\sigma) \wedge \ltlG\varphi$.  This is a well
established principle to evaluate LTL formulae over traces in a stepwise
manner (cf.\ \cite{Bacchus:1998:PTE:590220.590230}).
%

Some results of this comparison are visualised in \figref{fig:results}.
For each $\LTLFO$ formula, we randomly generated $5$
traces 
of length $100$ and passed them to the respective algorithm.  The
$x$-axis marks the trace length, the $y$-axis the space consumption of
the monitors; that is, the length of the formula after progression vs.\
the number of automata states.
In graph (a) the divergence between both approaches is the most striking
as it highlights one of the potential problems of progression, namely
that a lot of redundant information can accumulate: If $\forall x:p.\
r(x)$ ever becomes true, then $P$ will produce a new conjunct $\ltlG
\forall y:q.\ s(y)$ for each new event, even though semantically it
makes no difference.
%
%
%
In comparison, the automata-based monitor's size, measured in terms of
the number of SA states, stays more or less constant throughout the
trace.  This can be explained by the fact that for syntactically
different, but semantically equivalent formulae our
BA 
generator usually produces the same automaton (as is clearly the case in
this example).
%

With minor but noteworthy exceptions, the straight blue lines of (b)
and (d) mirror (and scale) the dashed black lines, which means that
our monitor is on average smaller by some degree, but in the long run
not substantially smaller.  Note how, unlike in (b), the straight blue
lines in (d) are not \emph{exact} scaled copies of the black dashed
lines, in that the graph depicting the performance of progression has
a number of spikes.  As the input traces 
for the monitors are randomly generated, the time when $\forall y: q.\
\ltlX s(y)$ becomes true differs, and hence the size of the progressed
formula may increase, whereas the automata-based monitor stays small
for the same reasons as outlined above in (a).

Finally, the graph of (c) is interesting in that both monitors show a
tendency to grow over time.  The reason for that is that the right hand
side of the $\ltlU$-operator in (c), $\ltlX s(x)$, makes use of the same
$x$ which is quantified on the left hand side.  For example, if the
events are given by $\{ \{ p(1), p(2) \}, \{ p(3), p(4), p(5) \} \}$,
the monitor would have to remember all the domain elements of the
$\preds$-operator until $\ltlX s(1) \wedge \ltlX s(2)$ and $\ltlX s(3)
\wedge \ltlX s(4) \wedge \ltlX s(5)$ hold.
Depending on how late in the trace 
this is the case (if ever), memory consumption increases for both
monitors.

\begin{figure}[htb]
  \centering
  \scalebox{.75}{%
  \hspace{-2em}
  \subfigure[$\ltlG (\forall x:p.\ r(x) \Rightarrow \ltlG \forall y:q.s(y))$]{
    \includegraphics[scale=0.3] {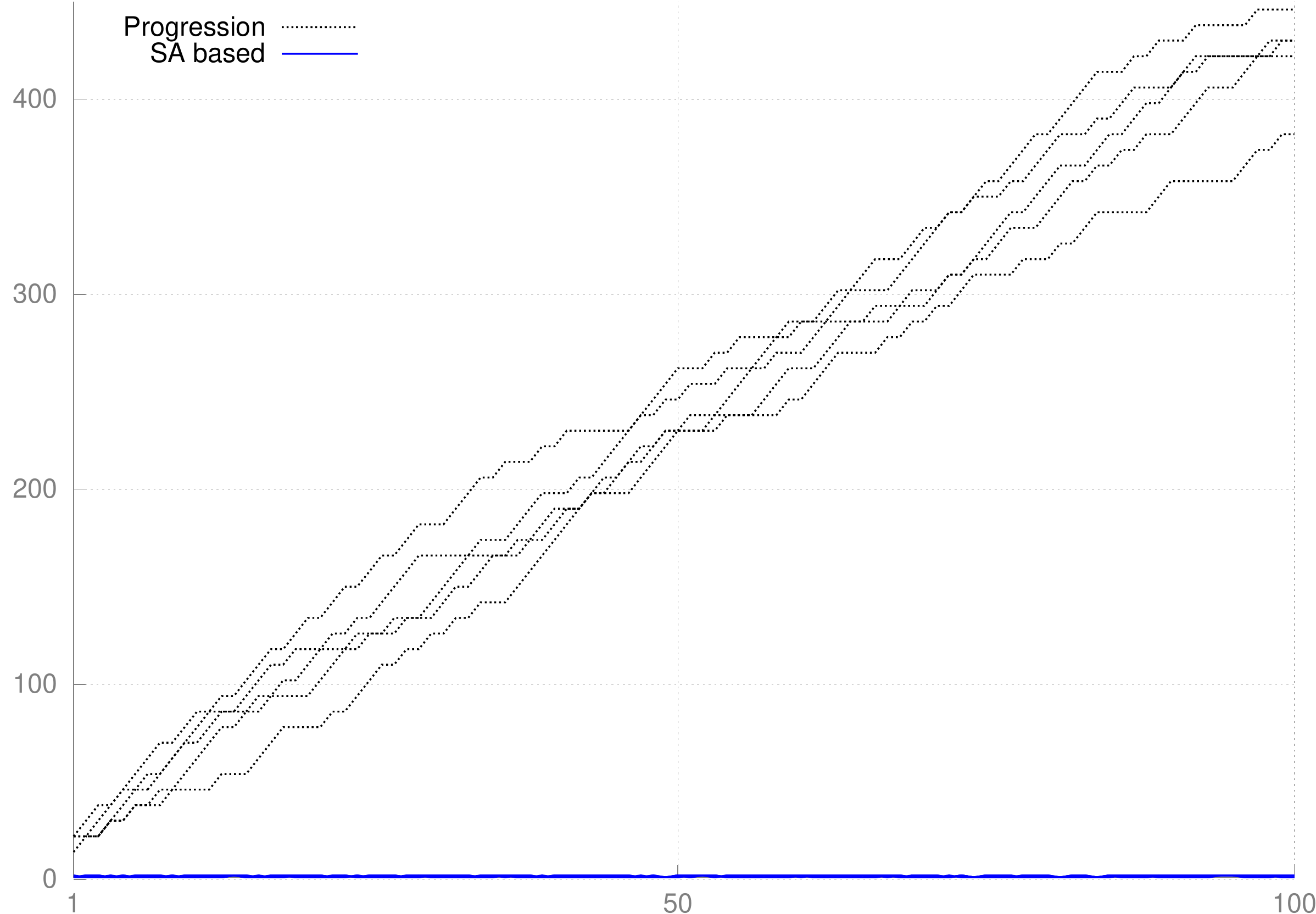}
    \label{fig:subfig1}
  }%
  \subfigure[$\ltlG (\forall x:p.\ r(x) \Rightarrow \ltlX s(x))$]{
    \includegraphics[scale=0.3] {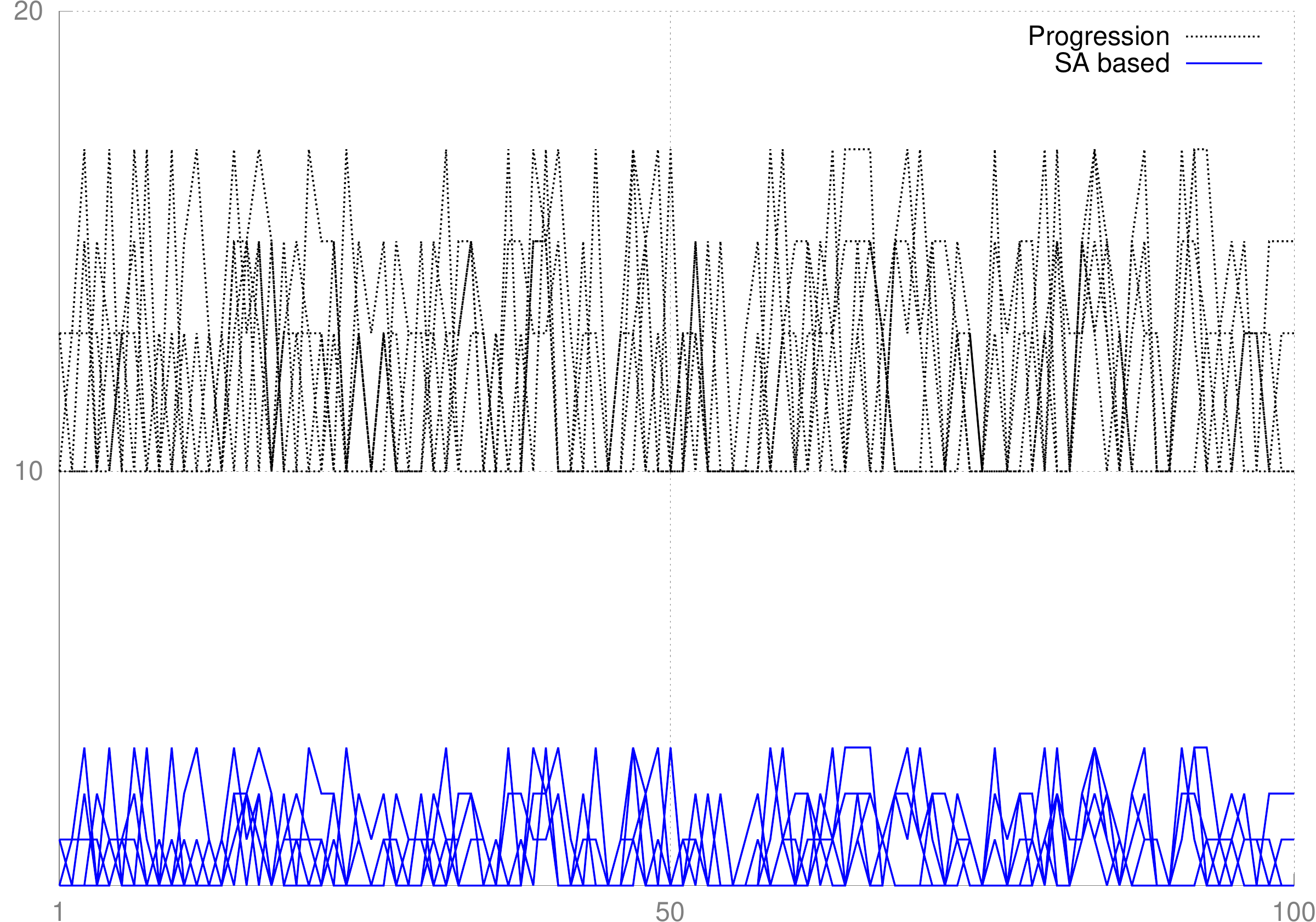}
    \label{fig:subfig2}
  }}
  \scalebox{.75}{%
  \hspace{-2em}
  \subfigure[$\ltlG (\forall x:p.\ r(x) \ltlU \ltlX s(x))$]{
    \includegraphics[scale=0.3] {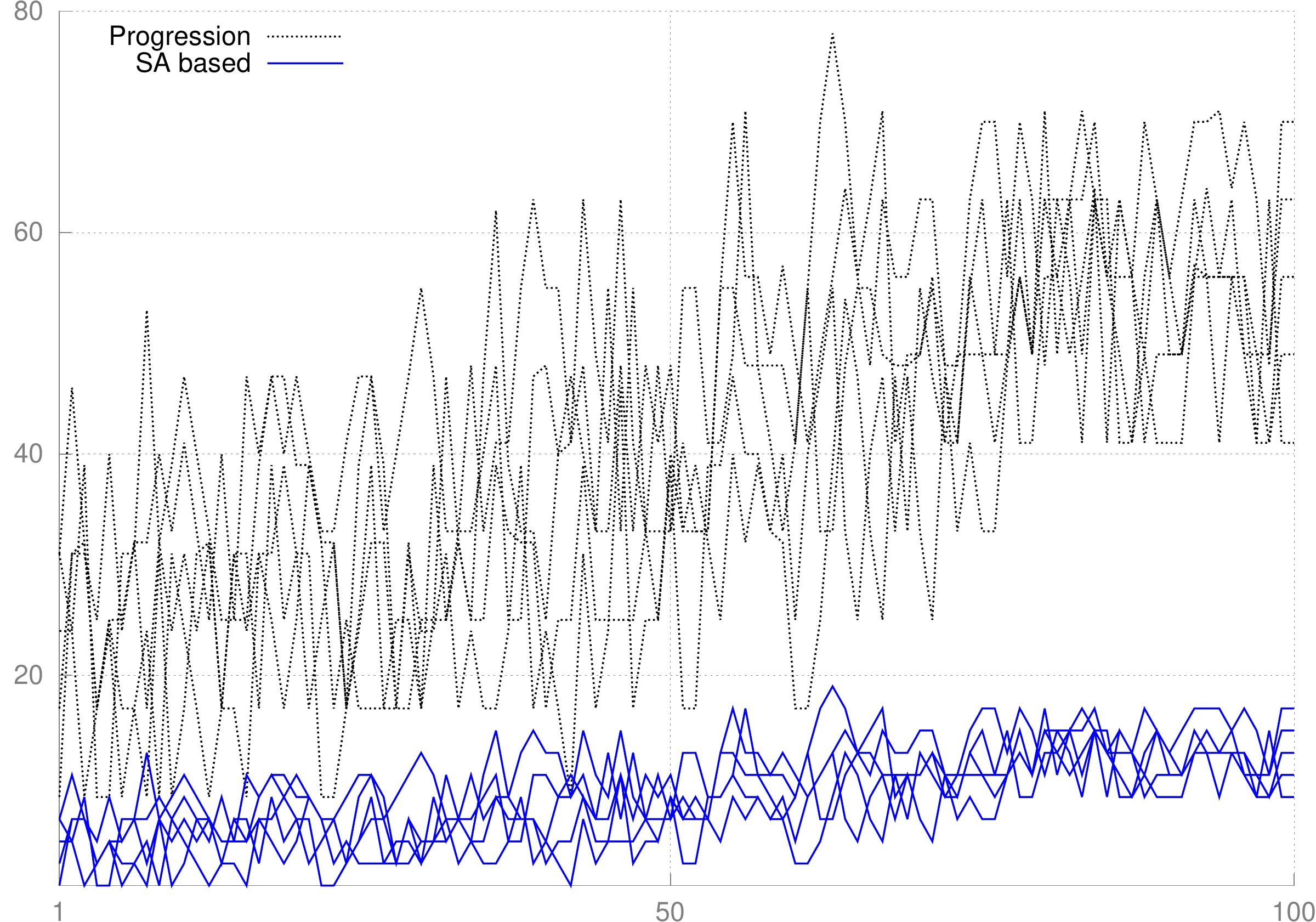}
    \label{fig:subfig3}
  }%
  \subfigure[$\ltlG (\forall x:p.\ r(x) \ltlU_w \forall y:q.\ \ltlX s(y))$]{
    \includegraphics[scale=0.3] {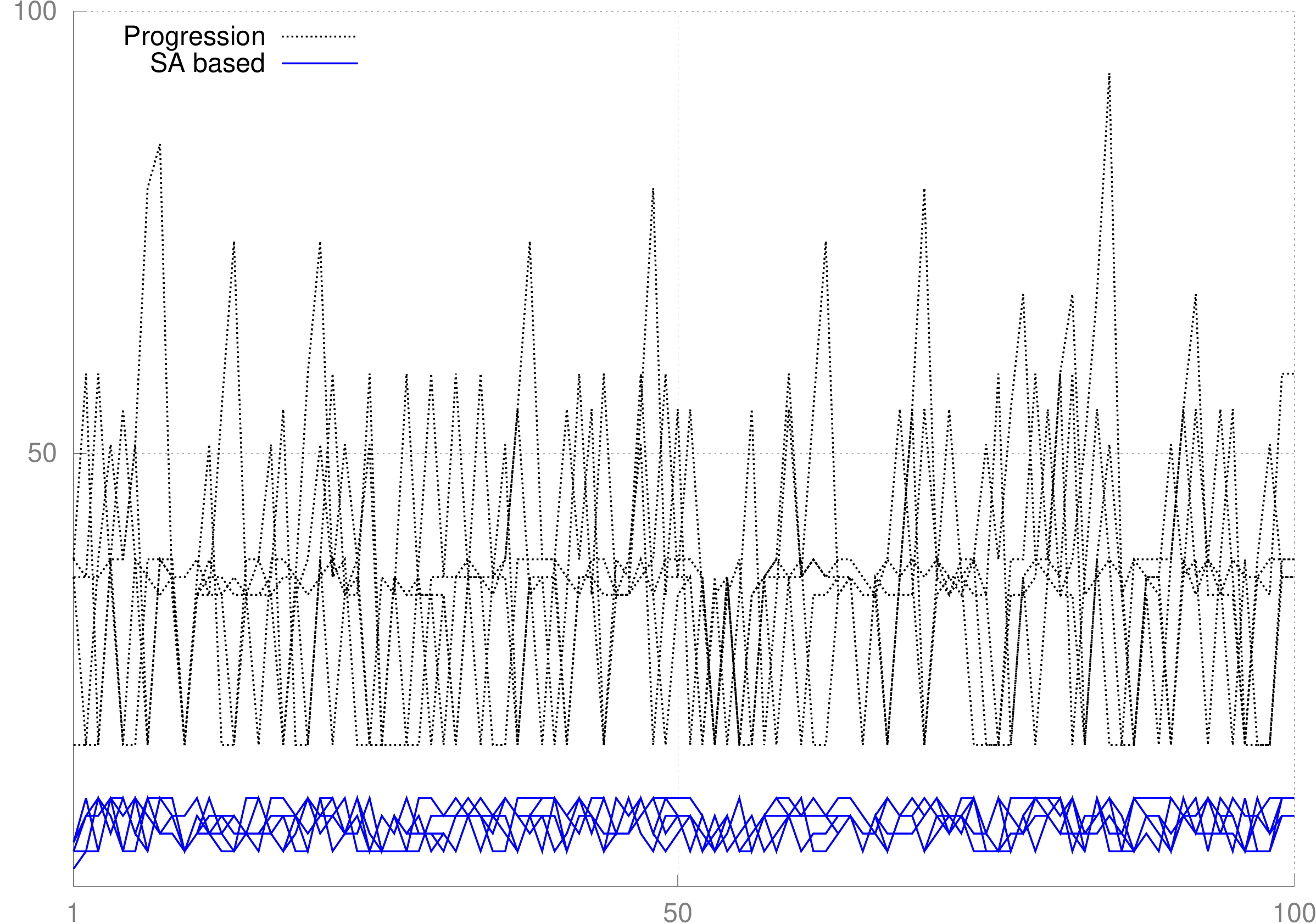}
    \label{fig:subfig3}
  }}
  \caption{Automata- (blue lines) vs.\ progression-based (black dashed lines) $\LTLFO$-monitoring.}
  \label{fig:results}
\end{figure}




\section{Related work}
\label{sec:related}
This is by no means the first work to discuss monitoring of first-order
specifications.
Mainly motivated by checking temporal triggers and temporal constraints,
the monitoring problem for different types of first-order logic has been
widely studied in the database community, for example.
In that context, Chomicki \cite{DBLP:journals/tods/Chomicki95} presents
a method to check for violations of temporal constraints, specified
using (metric) past temporal operators. The logic in
\cite{DBLP:journals/tods/Chomicki95} differs from $\LTLFO$ in that it
allows natural first-order quantification over a single countable and
constant domain, whereas quantified variables in $\LTLFO$ range over
elements that occur at the current position of the trace (see also
\cite{Halle:2008:RMM:1437901.1438836,bauer:gore:tiu:ictac09}).
Presumably, to achieve the same effect,
\cite{DBLP:journals/tods/Chomicki95} demands that policies are what is
called ``domain independent'', so that statements are only ever made
\wrt known objects.  As such, domain independence is a property of the
policy and shown to be undecidable.  In contrast, one could say that
$\LTLFO$ has a similar notion of domain independence already built-in,
because of its quantifier.
Like $\LTLFO$, the logic in \cite{DBLP:journals/tods/Chomicki95} is also
undecidable; no function symbols are allowed and relations are required
to be finite.
%
%
However, despite the fact that the prefix problem is not phrased as a
decision problem, its basic idea is already denoted by Chomicki under
the notion of a potential constraint satisfaction problem.  In
particular, he shows that the set of prefixes of models for a given
formula is not recursively enumerable.  On the other hand, the monitor
in \cite{DBLP:journals/tods/Chomicki95} does not tackle this problem and
instead solves what we have introduced as the word problem, which,
unlike the prefix problem, is decidable.

Basin et al.\ \cite{Basin_etal:monitoring_mfotl} extend Chomicki's
monitor towards bounded future operators using the same logic.
Furthermore, they allow infinite relations as long as these are
representable by automatic structures, i.e., automata models.  In this
way, they show that the restriction on formulae to be domain independent
is no longer necessary.  $\LTLFO$, in comparison, is more general,
in that it allows computable relations and functions.

The already cited work of Hall\'e and Villemaire
\cite{Halle:2008:RMM:1437901.1438836} describes a monitor for a logic
with quantification identical to ours, but without function symbols
and only equality instead of arbitrary computable
relations. Furthermore, the size of the individual worlds is a priori
bounded by a fixed value. Additionally, their monitor is fully
generated ``on the fly'' by using syntax-based decomposition rules,
similar to formula progression.  In our approach, however, it is
possible to pre-compute the individual BAs for the respective
subformulae of a policy/levels of the SA, and thereby bound the
complexity of that part of our monitor at runtime by a constant
factor.

Sistla and Wolfson \cite{DBLP:journals/tkde/SistlaW95} also discuss a
monitor for database triggers whose conditions are specified in a logic,
which uses an assignment quantifier that binds a single value or a
relation instance to a global, rigid variable. Their monitor is
represented by a graph structure, which is extended by one level for
each updated database state, and as such proportional in size to the
number of updates.


\section{Conclusions}
\label{sec:conc}

To the best of our knowledge, our monitoring algorithm is the first to
devise anticipatory monitors, i.e., address the prefix problem instead
of a (variant of the) word problem, for policies given in an undecidable
first-order temporal logic.  Moreover, unlike other approaches, such as
\cite{DBLP:journals/tkde/SistlaW95,Halle:2008:RMM:1437901.1438836} and
even \cite{bauer:kuester:vegliach:NFM12}, we are able to precompute most
of the state space required at runtime (i.e., replace step T1 in
Algorithm~T with a look-up in a precomputed table of SAs and merely use
a new valuation), as the different levels of our SAs correspond to more
or less standard BAs that can be generated before monitoring commences.
Moreover as required, our monitor is monotonic and in principle
trace-length independent.  The latter, however, deserves closer
examination.
Consider the formula given in \figref{fig:results}~(c): it basically
forces the monitor to memorise all occurrences of $p$ 
in every event and keep them until $s(x)$ holds, respectively.  If
$s(x)$ never holds (or not for a very long time), the space consumption
of the monitor is bound to grow.  Hence, unlike in standard $\LTL$,
trace-length dependence is not merely a property of the monitor, but
also of the specification.  We have not yet investigated whether
trace-length dependence is decidable and if so, at what cost.  However,
if the formula is \emph{not} trace-length dependent, then our monitor is
trace-length independent, as desired.
Given a $\varphi \in \LTLFO$ of which we know that it is trace-length
independent in principle, our monitor's size at runtime at any given
time is bounded by $O(|\sigma|^{\depth(\varphi)} \cdot
2^{|cl(\varphi)|})$, where $\sigma$ is the current input to the monitor:
Throughout the $\depth(\varphi)$ levels of the monitor, there are a
total of $O(|\sigma|^{\depth(\varphi)})$ ``submonitors'', which are of
size $O(2^{|cl(\varphi)|})$, respectively.
In contrast, the size of a progression-based monitor, even for
obviously trace-length independent formulae, such as given in
\figref{fig:results}~(a) is, in the worst case, proportional to the
length of the trace so far.

In Table~1 we have summarised the main results of
\S\ref{sec:prop_case}--\S\ref{sec:fo_case}, highlighting again the
differences of $\LTL$ compared to $\LTLFO$.  Note that as far as
trace-length dependence goes, for $\LTL$ it is always possible to devise
a trace-length independent monitor (cf.~\cite{bauer:leucker:schallhart:tosem}).

\begin{table}[tb]
  \centering
  \caption{Overview of complexity results.}
  \label{tab:res}
  \def\arraystretch{1.3}%
  \scalebox{0.9}{%
  \begin{tabular}{|l|c|c|p{4cm}|c|}
    \hline
    & \textbf{Satisfiability} & \textbf{Word problem} & \centering \textbf{Model checking} & \textbf{Prefix problem}\\
    \hline
    $\LTL$ & PSpace-complete & $<$ Bilinear-time & \centering PSpace-complete & PSpace-complete \\
    \hline
    $\LTLFO$ & Undecidable & PSpace-complete & \centering ExpSpace-membership, PSpace-hard & Undecidable \\
    \hline
  \end{tabular}}
\end{table}

\paragraph{Acknowledgements.}
Our thanks go to Patrik Haslum, Michael Norrish and Peter Baumgartner
for helpful comments on earlier drafts of this paper.



\bibliographystyle{abbrv}
\bibliography{shortstrings,bibliography}


\clearpage
\appendix
\section{Detailed proofs}
\noindent\textbf{Lemma~\ref{lem:finmodel}.  }
Let $\varphi$ be a sentence in first-order logic, then we can
construct a corresponding $\psi \in \LTLFO$ \st $\varphi$ has a finite
model iff $\psi$ is satisfiable.

\begin{proof}
  We construct $\psi$ as follows. 
  We first introduce a new unary $\preds$-operator $d$ whose arity is
  $\tau$ and that does not appear in $\varphi$.  We then replace every
  subformula in $\varphi$, which is of the form $\forall x.\ \theta$,
  with $\forall x:d.\ \theta$ (resp.\ for $\exists x.\ \theta$).
  Next, we encode some restrictions on the interpretation of function
  and predicate symbols:
  \begin{itemize}
  \item For each constant symbol $c$ in $\varphi$, we conjoin the
    obtained $\psi$ with $d(c)$.
  \item For each function symbol $f$ in $\varphi$ of arity $n$, we
    conjoin the obtained $\psi$ with
    %
    $\forall x_1:d.\ \ldots \forall x_n:d.\ d(f(x_1, \ldots, x_n))$.
  \item For each predicate symbol $p$ in $\varphi$ of arity $n$, we
    conjoin the obtained $\psi$ with $\forall(x_1, \ldots, x_n):p.\
    d(x_1) \wedge \ldots \wedge d(x_n)$.
  \item We conjoin $\exists x:d.\ d(x)$ to the obtained $\psi$ to ensure
    that the domain is not empty.
  \end{itemize}
  Finally, we fix the arities of symbols in $\psi$ appropriately to one
  of the following $\tau$, $\tau \times \ldots \times \tau$, $\tau
  \times \ldots \times \tau \rightarrow \tau$.

  Obviously, the formula $\psi$, constructed by the procedure above, is
  a syntactically correct $\LTLFO$ formula.
  Now, if $\psi$ is satisfiable by some $(\struct', \sigma)$, where
  $\struct' = (|\struct'|, I')$ and $\sigma \in (\struct')$-$\events$, it is easy to
  construct a finite model $\struct = (\dom, I)$ \st $\struct \models
  \varphi$ holds in the classical sense of first-order logic: set $\dom
  = d^{I'}$, $c^I = c^{I'}$, $f^I = f^{I'}|_{d^{I'} \times \ldots \times
    d^{I'}}$, $p^I=p^{I'}$, 
  respectively.
  %
  %
  By an inductive argument one can show that the \LTLFO semantics is
  preserved.  The other direction, if $\varphi$ is finitely satisfiable,
  is trivial: set $|\struct'| = \tau^{I'} = \dom$, $c^{I'} = c^I$,
  $f^{I'} = f^I$, respectively, and $\sigma = \{ (p,\vec{e}) \mid \vec{e} \in p^I\} \cup \{ (d,e) \mid e \in |\struct|\}$.
  %
  \qed
\end{proof}

\noindent\textbf{Theorem~\ref{lem:fo:word}.  }
The word problem for $\LTLFO$ is PSpace-complete.

\begin{proof}
  To evaluate a formula $\varphi \in \LTLFO$ over some linear Kripke
  structure, $\cK$, we can basically use the inductive definition of the
  semantics of \LTLFO: If used as a function, starting in the initial
  state of $\cK$, $s_0$, it evaluates $\varphi$ in a depth-first manner
  with the maximal depth bounded by $|\varphi|$.

  To show hardness, we reduce the following problem, which is known to
  be PSpace-complete: Let $F = Q_1x_1.\ Q_2x_2.\ \ldots Q_nx_n.\ E(x_1,
  x_2, \ldots, x_n)$, where $Q \in \{ \forall, \exists \}$ and $E$ is a
  Boolean expression over variables $x_1, x_2, \ldots, x_n$. Does $F$
  evaluate to $\top$ (cf.\ \cite{578533})?  The reduction of this
  problem proceeds as follows.  We first construct a formula $\varphi
  \in \LTLFO$ in prenex normal form,
  \[
  \varphi = Q_1x_1:d.\ Q_2x_2:d.\ \ldots Q_nx_n:d.\ E(p_{x_1}(x_1),
  p_{x_2}(x_2), \ldots, p_{x_n}(x_n)).
  \]
  Then, using an $\preds$-operator $p_{x_i}$ for every variable $x_i$,
  we construct a singleton Kripke structure, $\cK$, \st $\lambda(s_0) =
  (\struct, \{ (d,0), (d,1), (p_{x_1},1), (p_{x_2},1), \ldots,
  (p_{x_n},1) \}), $ where $\dom = \{ 0, 1 \}$ and $I$ defined
  accordingly.
  It can easily be seen that $F$ evaluates to $\top$ iff $\cK$ is a
  model for $\varphi$. Moreover, this construction can be obtained in no
  more than a polynomial number of steps \wrt the size of the
  input. \qed
\end{proof}

\noindent\textbf{Theorem~\ref{lem:fo:mc}.  }
The model checking problem for $\LTLFO$ is in ExpSpace.

\begin{proof}
  For a given $\varphi \in \LTLFO$ and $(\struct)$-Kripke structure
  $\cK$ defined as usual, where $\struct = (\dom, I)$, we construct a
  propositional Kripke structure $\cK'$ and $\varphi' \in \LTL$, \st
  $\cL(\cK) \subseteq \cL(\varphi)$ iff $\cL(\cK') \subseteq
  \cL(\varphi')$ holds.
  Assuming variable names in $\varphi$ have been adjusted so that each
  has a unique name, the construction of $\varphi'$ proceeds as follows.

  Wlog.\ we can assume $\dom$ to be a finite set $\{ d_0, \ldots, d_n
  \}$.
  We first set $\varphi'$ to $\varphi$ and extend the corresponding
  $\Gamma$ by the constant symbols $c_{d_0}, \ldots, c_{d_n}$, \st
  $c_{d_i}^I = d_i$, respectively;
  that is, we add the respective interpretations of each $c_{d_i}$ to
  $I$.  This step obviously does not require more than polynomial space.
  We then replace all subformulae in $\varphi'$ of the form $\nu =
  Q\vx:p.\ \psi(\vx)$ exhaustively with the following constructed
  $\psi'$:

  \begin{itemize}
  \item Set $\psi' = \top$.
  \item For each state $s \in S$ do the following:
    \begin{itemize}
    \item Let $T = \{ \vd \mid \lambda(s) = (\struct', \sigma),
      \struct' \sim \struct \hbox{ and } (p,\vd) \in \sigma \}$.
    \item If $Q = \forall$, then
      $$
      \psi' = \psi' \wedge (\tilde{s} \Rightarrow \bigwedge_{\vd \in T}
      \psi(\vx)[\vc/\vx]), \hbox{where } \vc \hbox{ is \st\ } \vc^I = \vd,
      $$ 
      otherwise
      $$
      \psi' = \psi' \wedge (\tilde{s} \Rightarrow \bigvee_{\vd \in T}
      \psi(\vx)[\vc/\vx]), \hbox{where } \vc \hbox{ is \st\ } \vc^I = \vd,
      $$
      where $\tilde{s}$ is a fresh, unique predicate symbol meant
      to represent state $s$.
    \end{itemize}
  \end{itemize}
  Then, for all subformulae in $\varphi'$ of the form $\tilde{s}
  \Rightarrow \psi$ we do the following:
  \begin{itemize}
  \item For each $r(\vt)$ occurring in $\psi$, where $r \in \Rels$ and
    $\vt$ are terms, let $\vd = \vt^I$, and replace $r(\vt)$ by a fresh,
    unique predicate symbol $r_{\vd}$.
  \end{itemize}
  It is easy to see that, indeed, $\varphi'$ is a syntactically correct
  standard LTL formula, where all quantifiers have been eliminated.  In
  terms of space complexity, note that in the first loop, we replace
  each quantified formula by an 
  expression at least $|\cK|$ times longer than the original quantified
  formula.  In the worst case, the final formula's length will be
  exponential in the number of quantifiers.

  We now define the propositional Kripke structure $\cK' = (S', s'_0,
  \lambda', \rightarrow')$ as follows.  Let $S' = S$, $s'_0 = s_0$, and
  $\rightarrow' = \rightarrow$.  In what follows, let $s$ be a state and
  $\lambda(s) = ((|\struct|, I), \sigma)$.  (Note, this is the labelling
  function of $\cK$.)  The alphabet of $\cK'$ is given by $2^{\AP}$,
  where $\AP = \{ r_{\vd} \mid r \in \Rels \hbox{ and } \vd \in \dom
  \} \cup \{\tilde{s} \mid s \in S\}$.
  Finally, we define the labelling function of $\cK'$ as
  $
  \lambda'(s) = \{ \tilde{s} \} \cup \{ r_{\vd} \mid r \in \Rels \hbox{ and } r^I(\vd) \hbox{ is true} \}.
  $
  It is easy to see that, indeed, $\cK'$ preserves all the runs possible
  through $\cK$.

  One can show by an easy induction on the structure of $\varphi'$ that,
  indeed, $\cL(\cK) \subseteq \cL(\varphi)$ iff $\cL(\cK') \subseteq
  \cL(\varphi')$ holds.
  \qed
\end{proof}

\noindent\textbf{Lemma~\ref{lem:fo:restricted}.  }
Let
$\struct$ be a first-order
structure and $\varphi \in \LTLFO$, then
$\cL(\varphi)_{\struct} =
\{ (\tstruct, w) \mid \tstruct \sim \struct, w \in (2^{\events})^\omega,
\hbox{ and } (\tstruct, w) \models \varphi \}$.
%
Testing if $\cL(\varphi)_{\struct} \neq \emptyset$ is generally
undecidable.

\begin{proof}
  Let $K = (x_1, y_1), \ldots, (x_k, y_k)$ be an instance of Post's
  Correspondence Problem over $\Sigma = \{ 0, 1 \}$, where $x_i,y_i
  \in \Sigma^+$, which is known to be undecidable in this
  form.
  Let us now define a formula $\varphi_K = \exists \gamma:z.\
  pcp(\gamma)$, a structure $\struct = (\Sigma^+, I)$, \st
  $pcp^I(u) \Leftrightarrow u = x_{i_1}\ldots x_{i_n} = y_{i_1}\ldots
  y_{i_n}$, where $u \in \Sigma^+$ 
  and $pcp$ is of corresponding arity.
  Obviously, $pcp^I(u)$ can be computed in finite time for any given
  $u$.
  Let us now show that $\cL(\varphi_K)_\struct \neq \emptyset$ iff $K$ has
  a solution.
  %

  ($\Rightarrow$:) Because $\mathcal{L}(\varphi_K)_\struct \neq
  \emptyset$, let's assume there is a word $u \in \Sigma^+$ st.\ $(z,u)
  \in \sigma$ and \ $(\struct,\sigma) \in
  \mathcal{L}(\varphi_K)_{\struct}$.  By the choice of $pcp^I$,
  there exists a sequence of indices, $i_1, \ldots, i_n$, st.\ $u =
  x_{i_1}\ldots x_{i_n} = y_{i_1}\ldots y_{i_n}$, i.e., $K$ has a
  solution.

  ($\Leftarrow$:)
  Let's assume $K$ has a solution, i.e., there exists a word $u \in
  \Sigma^+$ and a sequence of indices, $i_1, \ldots, i_n$, st.\ $u =
  x_{i_1}\ldots x_{i_n} = y_{i_1}\ldots y_{i_n}$. We now have to show
  that $\mathcal{L}(\varphi_K)_{\struct} \neq \emptyset$. For this
  purpose, set $\sigma = \{ (z,u) \}$, then $(\struct,\sigma) \in
  \mathcal{L}(\varphi_K)_{\struct}$ and, consequently,
  $\mathcal{L}(\varphi_K)_{\struct} \neq \emptyset$.
  \qed
\end{proof}

\noindent\textbf{Theorem~\ref{lem:fo:pf}.  }
The prefix problem for $\LTLFO$ is undecidable.
\begin{proof}
  By way of a similar reduction used in Theorem~\ref{thm:prop:prefix}
  already, i.e., for any $\varphi$, $\struct$, and $\sigma \in \events$
  we have that $(\struct, \sigma) \in \bad(\ltlX\varphi)$ iff
  $\calL(\varphi)_\struct = \emptyset$.
  The $\Leftarrow$-direction is obvious.  For the other direction:
  \[
  \begin{array}{ll}
    & (\struct, \sigma) \in \bad(\ltlX\varphi) \\
    \Rightarrow & \hbox{for all }\tstruct \sim \struct \hbox{ and } w \in \events^\omega, \hbox{ we have that } (\struct\tstruct, \sigma w) \not\models \ltlX\varphi \\
    \Rightarrow & \hbox{for all }\tstruct \sim \struct \hbox{ and } w \in \events^\omega, \hbox{ we have that } (\tstruct, w) \not\models \varphi \\
    \Rightarrow & \cL(\varphi)_\struct = \emptyset
    \hbox{ (which is generally undecidable by Lemma~\ref{lem:fo:restricted})}.
  \end{array}
  \]
  \qed
\end{proof}


\noindent\textbf{Lemma~\ref{lem:sa:correct}.  }
Let $\varphi \in \LTLFO$ (not necessarily a sentence) and $v$ be a
valuation.
For each accepting run $\rho$ in $\cAphiv$ over input
$(\tstruct, w)$, $\psi \in \cl(\varphi)$, and $i \geq 0$, we have that
$\psi \in \rho(i)$ iff $(\tstruct, w, v, i) \models \psi$.
\begin{proof}
  We proceed by a nested induction on $\depth(\varphi)$ and the
  structure of $\psi \in \cl(\varphi)$.
  For the base case let $\depth(\varphi) = 0$:
  We fix $\rho$ to be an accepting run in
  $\cAphiv$ over $(\tstruct, w)$, and proceed by induction over those
  formulae $\psi \in \cl(\varphi)$ which are of depth zero (i.e.,
  without quantifiers) since $\depth(\varphi) = 0$.  Therefore, this
  case basically resembles the correctness argument of B\"uchi
  automata for propositional LTL (cf.\
  \cite[\S5]{Baier:2008:PMC:1373322}).  For an arbitrary $i \geq 0$,
  we have
  \begin{itemize}
  \item $\psi = r(\vt)$:
    \[
    \begin{array}{lcl}
      r(\vt)
      \in \rho(i) & \Leftrightarrow & \vt^{I_i} \in r^{I_i} (\hbox{by the definition of } \ddelta),\\
      & & \hbox{where, as before, for any variable } x \hbox{ in } \vt, \hbox{ by } x^{I_i} \hbox{ we mean } v(x) \\
      & \Leftrightarrow &(\tstruct, w, v, i) \models r(\vt)\ (\hbox{by the semantics of } \LTLFO)
    \end{array}
    \]
  \item $\psi = p(\vt)$: analogous to the above.
  \item $\psi = \neg\psi'$:
    \[
    \begin{array}{ll}
      \neg\psi' \in \rho(i) & \Leftrightarrow \psi' \not\in \rho(i)\ (\hbox{by the completeness assumption of all } q \in Q)\\
      & \Leftrightarrow (\tstruct, w, v, i) \not\models \psi'\ (\hbox{by induction hypothesis})\\
      & \Leftrightarrow (\tstruct, w, v, i) \models \neg\psi'\ (\hbox{by the semantics of } \LTLFO)
    \end{array}
    \]
  \item $\psi = \psi_1 \wedge \psi_2$:
    \[
    \begin{array}{ll}
      \psi_1 \wedge \psi_2 \in \rho(i) & \Leftrightarrow \{\psi_1, \psi_2\} \subseteq \rho(i)\ (\hbox{by the completeness assumption of all } q \in Q)\\
      & \Leftrightarrow (\tstruct, w, v, i) \models \psi'_1 \hbox{ and } (\tstruct, w, v, i) \models \psi_2\ (\hbox{by induction hypothesis})\\
      & \Leftrightarrow (\tstruct, w, v, i) \models \psi_1 \wedge \psi_2\ (\hbox{by the semantics of } \LTLFO)
    \end{array}
    \]
  \item $\psi = \ltlX \psi'$:
    \[
    \begin{array}{ll}
      \ltlX \psi' \in \rho(i) & \Leftrightarrow \psi' \in \rho(i+1)\ (\hbox{by the definition of } \ddelta)\\
      & \Leftrightarrow (\tstruct, w, v, i+1) \models \psi'\ (\hbox{by induction hypothesis})\\
      & \Leftrightarrow (\tstruct, w, v, i) \models \ltlX\psi'\ (\hbox{by the semantics of } \LTLFO)
    \end{array}
    \]
  \item $\psi = \psi_1 \ltlU \psi_2$: we first show the
    $\Rightarrow$-direction.  For this, let us first show that there
    is a $j \geq i$, such that $(\tstruct, w, v, j) \models \psi_2$
    holds.  For suppose not, then for all $j \geq i$, we have that
    $(\tstruct, w, v, j) \not\models \psi_2$ and, consequently, by
    induction hypothesis $\psi_2 \not\in \rho(j)$.  By definition of
    $\ddelta$, since $\psi_1 \ltlU \psi_2 \in \rho(i)$ and there isn't
    a $j$ \st $\psi_2 \in \rho(j)$, we have that $\psi_1 \ltlU \psi_2
    \in \rho(j)$ for all $j \geq 0$.  On the other hand, $\rho$ is
    accepting in $\cAphi$, thus there exist infinitely many $j \geq
    i$, \st $\psi_1 \ltlU \psi_2 \not\in \rho(j)$ or $\psi_2 \in
    \rho(j)$ by the definition of the generalised B\"uchi acceptance
    condition $\calF$, which is a contradiction.  Let us, in what
    follows, fix the smallest such $j$.  We still need to show that
    for all $i \leq k \leq j$, $(\tstruct, w, v, k) \models \psi_1$
    holds.  As $j$ is the smallest such $j$, where $\psi_2 \in
    \rho(j)$ it follows that $\psi_2 \not\in \rho(k)$ for any such
    $k$.  As $\psi_1 \ltlU \psi_2 \in \rho(i)$, it follows by
    definition of $\ddelta$ that $\psi_1 \in \rho(i)$ and $\psi_1
    \ltlU \psi_2 \in \rho(i+1)$.  We can then inductively apply this
    argument to all $i \leq k < j$, such that $\psi_1 \in \rho(k)$ and
    $\psi_1 \ltlU \psi_2 \in \rho(k+1)$ hold.
    The statement then follows from the induction hypothesis.\\

    Let us now focus on the $\Leftarrow$-direction, i.e., suppose
    $(\tstruct, w, v, i) \models \psi_1 \ltlU \psi_2$ implies that $\psi_1
    \ltlU \psi_2 \in \rho(i)$.  By assumption, there is a $j \geq i$,
    such that $(\tstruct, w, v, j) \models \psi_2$ and for all $i \leq k <
    j$, we have that $(\tstruct, w, v, k) \models \psi_1$.  Therefore, by
    induction hypothesis, $\psi_2 \in \rho(j)$ and $\psi_1 \in \rho(k)$
    for all such $k$.  Then, by the completeness assumption of all $q
    \in Q$, we also get $\psi_1 \ltlU \psi_2 \in p_j$, and if $j = i$,
    we are done.  Otherwise with an inductive argument similar to the
    previous case on $k = j - 1$, $k = j - 2$, \ldots, $k = i$, we can
    infer that $\psi_1 \ltlU \psi_2 \in \rho(k)$.
  \end{itemize}
  Let $\depth(\varphi) = n > 0$, i.e., we suppose that our claim holds
  for all formulae with quantifier depth less than $n$.  We continue our
  proof by structural induction, where the quantifier free cases are
  almost exactly as above.  Therefore, we focus only on the following
  case.
  \begin{itemize}
  \item $\psi = \forall \vec{x}:p.\ \psi'$:
    for this case, as before with the $\ltlU$-operator, we will first
    show the $\Rightarrow$-direction, i.e., for all $i \geq 0$ we have
    $\forall \vec{x}:p.\ \psi' \in \rho(i)$ implies $(\tstruct, w, v,
    i) \models \forall \vec{x}:p.\ \psi'$. 
    By the semantics of \LTLFO, the latter is equivalent to for all
    $(p,\vd) \in w_i$, $(\tstruct, w, v \cup \{\vx \mapsto \vd\}, i)
    \models \psi'$. If there is no $(p, \vd) \in w_i$ the statement is
    vacuously true.
    Otherwise,
    there are some actions $(p,\vd) \in w_i$ and
    \[
    \spawn(\rho(i), (\struct_i, w_i)) = B \land \bigwedge_{(p, \vd) \in
      w_i}\cA_{\psi', v \cup \{\vx \mapsto \vd\}},
    \] 
    where $B$ is a Boolean combination of SAs corresponding to the
    remaining elements in $\rho(i)$.
    %
    %
    As $\rho$ is accepting in $\cAphiv$, there exists a $Y_i$
    satisfying $\spawn(\rho(i), (\struct_i, w_i))$, \st all $\cA \in
    Y_i$ have an accepting run on input $(\tstruct^i, w^i)$. It
    follows that $Y_i$ contains an automaton $\cA_{\psi', v \cup \{\vx
      \mapsto \vd\}}$ for each action $(p, \vd) \in
    w_i$ that has an accepting run $\rho'$.
    As the respective levels of these automata is $n-1$, we can
    use the induction hypothesis and note that the following holds
    true for each of the $\cA_{\psi', v \cup \{\vx \mapsto \vd\}} \in
    Y_i$:
    \[
    \hbox{for all: } \nu \in \cl(\psi') \hbox{ and } l \geq 0, \nu \in
    \rho'(l) \hbox{ iff } (\tstruct, w, v \cup \{\vx \mapsto \vd \},
    i+l) \models \nu,
    \]
    We can now set $\nu = \psi'$, respectively, and $l = 0$, from
    which it follows that $\psi' \in \rho'(0)$ iff $(\tstruct, w,v
    \cup \{\vx \mapsto \vd \}, i) \models \psi'$, respectively.  As by
    construction of an SA the initial states of runs contain the
    formula which the SA represents, we have $\psi' \in \rho'(0)$ and
    hence $(\tstruct, w, v \cup \{\vx \mapsto \vd\}, i) \models
    \psi'$, respectively. As this holds for all $\cA_{\psi', v \cup
      \{\vx \mapsto \vd\}}$, where $(p,\vd) \in w_i$, it follows by
    semantics of $\LTLFO$ that $(\tstruct, w, v, i) \models \forall
    \vec{x}:p.\ \psi'$.

    Let us now consider the $\Leftarrow$-direction, i.e., $(\tstruct,
    w, v, i) \models \forall \vec{x}:p.\ \psi'$ implies $\forall
    \vec{x}:p.\ \psi' \in \rho(i)$, which we show by
    contradiction. Suppose $\forall \vec{x}:p.\ \psi' \not\in
    \rho(i)$, which implies by the completeness assumption of all $q
    \in Q$ that $\neg\forall \vec{x}:p.\ \psi' \in \rho(i)$ holds.  If
    there is no $(p,\vd) \in w_i$, then $\spawn(\rho(i), (\struct_i,
    w_i))$ is equivalent to $\bot$ and $\rho$ could not be accepting.
    Therefore there must be some $(p,\vd) \in w_i$, \st
    \[
    \spawn(\rho(i), (\struct_i, w_i)) = B \land \bigvee_{(p,\vd) \in w_i}
    \cA_{\neg\psi', v \cup \{\vx \mapsto \vd\}},
    \]
    where $B$ is a Boolean combination of SAs
    corresponding to the remaining elements in $\rho(i)$.
    Because $\rho$ is accepting in $\cAphiv$, there exists a $Y_i$,
    such that $Y_i \models \spawn(\rho(i), (\struct_i, w_i))$, and there
    is at least one SA, $\cA'=\cA_{\neg\psi', v \cup \{\vx \mapsto
      \vd\}} \in Y_i$, with corresponding $(p, \vd) \in w_i$,
    \st $(\tstruct^i, w^i)$ is accepted by $\cA'$ as input; that is,
    $\cA'$ has an accepting run, $\rho'$, on said input.  As this
    automaton's level is $n - 1$, we can apply the induction
    hypothesis and obtain
    \[
    \hbox{for all: } \nu \in \cl(\neg\psi') \hbox{ and } l
    \geq 0, \nu \in \rho'(l) \hbox{ iff } (\tstruct, w, v \cup \{\vx \mapsto \vd\}, i+l)
    \models \nu.
    \]
    We can now set $\nu = \neg\psi'$ and $l = 0$, and
    since $\nu$ belongs to the initial states in accepting runs, we
    derive $(\tstruct, w, v \cup \{\vx \mapsto \vd\}, i) \models \neg\psi'$, which is
    a contradiction to our initial hypothesis.
    \qed
  \end{itemize}
\end{proof}

\noindent\textbf{Theorem~\ref{thm:sa:correct}.  }
The constructed SA is correct in the sense that for any sentence
$\varphi \in \LTLFO$, we have that $\calL(\cA_\varphi) =
\calL(\varphi)$.
\begin{proof}
  $\subseteq$:
  Follows from Lemma~\ref{lem:sa:correct}: let $\rho$ be an accepting
  run over $(\tstruct, w)$ in $\cAphi$.  By definition of an (accepting)
  run, $\varphi \in \rho(0)$, and therefore $(\tstruct, w) \in
  \cL(\varphi)$.

  $\supseteq$:
  We show the more general statement: Given a (possibly not closed)
  formula $\varphi \in \LTLFO$ and valuation $v$. It holds that
  $\{(\tstruct, w) \mid (\tstruct, w, v, 0) \models \varphi \}
  \subseteq \cL(\cAphiv)$.
  We define for all $i \geq 0$ the set $\rho(i) = \{ \psi \in
  \cl(\varphi) \mid (\tstruct, w, v, i) \models \psi \}$ for some
  arbitrary but fixed formula $\varphi \in \LTLFO$ and valuation $v$,
  and arbitrary but fixed $(\tstruct, w)$, where $(\tstruct, w, v, 0)
  \models \varphi$. Let us now show that $\rho = \rho(0) \rho(1)
  \ldots$ is a well-defined run in $\cAphiv$ over $(\tstruct, w)$:
  Firstly, from the construction of $Q$, it follows that for all $i$,
  $\rho(i) \in Q$. Secondly, since $\varphi \in \cl(\varphi)$ and
  $(\tstruct, w, v, 0) \models \varphi$, $\rho(0)$ always contains
  $\varphi$. Thirdly, $\rho(i+1) \in \ddelta(\rho(i), (\struct_i,
  w_i))$ holds for all $i$.  The latter is the case iff
  \begin{itemize}
  \item for all $\ltlX\psi \in \cl(\varphi)$: $\ltlX\psi \in \rho(i)$ iff
    $\psi \in \rho(i+1)$, and
  \item for all $\psi_1 \ltlU \psi_2 \in \cl(\varphi)$: $\psi_1 \ltlU
    \psi_2 \in \rho(i)$ iff $\psi_2 \in \rho(i)$ or ($\psi_1 \in \rho(1)$
    and $\psi_1 \ltlU \psi_2 \in \rho(i + 1)$).
  \end{itemize}
  The first condition can be shown as follows:
  \[
  \begin{array}{ll}
    \ltlX \psi \in \rho(i) & \Leftrightarrow (\tstruct, w, v, i) \models \ltlX\psi\ (\hbox{by definition of } \rho(i))\\
    & \Leftrightarrow (\tstruct, w, v, i+1) \models \psi\ (\hbox{by the semantics of \LTLFO})\\
    & \Leftrightarrow \psi \in \rho(i+1)\ (\hbox{by the definition of } \rho(i+1)).
  \end{array}
  \]
  The second can be shown as follows:
  \[
  \begin{array}{ll}
    \psi_1 \ltlU \psi_2 \in \rho(i) & \Leftrightarrow (\tstruct, w, v, i) \models \psi_1\ltlU\psi_2\ (\hbox{by definition of } \rho(i))\\
    & \Leftrightarrow (\tstruct, w, v, i) \models \psi_2 \vee (\psi_1 \wedge \ltlX(\psi_1 \ltlU\psi_2)) \\
    & \Leftrightarrow (\tstruct, w, v, i) \models \psi_2 \hbox{ or } ((\tstruct, w, v, i) \models \psi_1 \hbox{ and } (\tstruct, w, v, i+1) \models \psi_1 \ltlU\psi_2)\\
    & \Leftrightarrow \psi_2 \in \rho(i) \hbox{ or } (\psi_1 \in \rho(1) \hbox{ and } \psi_1 \ltlU \psi_2 \in \rho(i+1))\ (\hbox{by definition of } \rho).
  \end{array}
  \]
  It remains to show that $\rho$ is also accepting in $\cAphiv$.
  We proceed by induction on $\depth(\varphi)$.  In what follows, let
  $\depth(\varphi) = 0$, i.e., we are showing local acceptance only.
  By the definition of acceptance we must have that for all
  $\psi_1\ltlU\psi_2 \in \cl(\varphi)$, there exist infinitely many $i
  \geq 0$, \st $\rho(i) \in F_{\psi_1\ltlU\psi_2}$, where
  $F_{\psi_1\ltlU\psi_2} \in \calF$.  For suppose not, i.e., there are
  only finitely many such $i$, then there is a $k \geq 0$, \st for all
  $j \geq k$ we have $\rho(j) \not\in F_{\psi_1\ltlU\psi_2}$ and
  therefore $\psi_1\ltlU\psi_2 \in \rho(j)$ and $\psi_2 \not\in\rho(j)$
  by definition of $F_{\psi_1\ltlU\psi_2}$.  In particular, from
  $\psi_1\ltlU\psi_2 \in \rho(k)$ we derive by construction of $\rho(k)$
  that there must be some $g \geq k$, \st $(\tstruct^g, w^g) \in
  \cL(\psi_2)$ and thus $\psi_2 \in \rho(k)$ with $g \geq k$.
  Contradiction.\\

  Let us now assume the statement holds for all formulae with depth
  strictly less than $n$ and assume $\depth(\varphi) = n$, where $n >
  0$.  We don't show local acceptance of $\rho$ as it is virtually the
  same as in the base case, and instead go on to show that for all $i
  \geq 0$, there is a $Y_i$, \st $Y_i \models \spawn(\rho(i),
  (\struct_i, w_i))$ and all $\cA \in Y_i$ are accepting $(\tstruct^i,
  w^i)$.
  Let us define the following two sets:
  \[
  Y_i^\forall = \{ \cA_{\psi, v \cup \{\vx \mapsto \vd\}} \mid \forall \vec{x}:p.\ \psi \in \rho(i) \hbox{ and } (p,\vd) \in w_i \}
  \]
  and
  \[
  \begin{array}{lll}
    Y_i^\exists  = \{ \cA_{\neg\psi, v \cup \{\vx \mapsto \vd\}} & \mid &
    \neg\forall \vec{x}:p.\ \psi \in \rho(i), (p,\vd)
    \in w_i, \\
    & & \hbox{ and } (\tstruct, w, v \cup \{\vx \mapsto \vd\}, i)
    \not\models \psi\}.
  \end{array}
  \]
  Set $Y_i = Y_i^\forall \cup Y_i^\exists$, which by construction
  satisfies $\spawn(\rho(i), (\struct_i, w_i))$.  We still need to
  show that every automaton in this set accepts $(\tstruct^i, w^i)$.
  Now for $\cA_{\nu,v \cup \{\vx \mapsto \vd\}} \in Y_i$ we have
  either $\nu =\psi$ for some $\forall \vec{x}:p.\ \psi \in \rho(i)$
  and $(p, \vd) \in w_i$, or $\nu = \neg\psi$ for some $\neg\forall
  \vec{x}:p.\ \psi \in \rho(i)$ and $(p, \vd) \in w_i$ \st $(\tstruct,
  w, v \cup \{\vx \mapsto \vd\}, i) \not\models \psi$ holds.  In
  either case by definition of $\rho(i)$ and semantics of $\LTLFO$, it
  follows that $(\tstruct, w, v \cup \{\vx \mapsto \vd\}, i) \models
  \nu$.  Since the level of $\cA_{\nu,v \cup \{\vx \mapsto \vd\}}$ is
  strictly less than $n$, we can apply the induction hypothesis and
  construct an accepting run for $(\tstruct^i, w^i)$, where
  $(\tstruct, w, v \cup \{\vx \mapsto \vd\}, i) \models \nu$, in
  $\cA_{\nu,v \cup \{\vx \mapsto \vd\}}$. The statement follows.
  \qed
\end{proof}

\noindent\textbf{Theorem~\ref{thm:mon:correct}.  }
  $M_{\varphi}(\tstruct, u) = \top \Rightarrow$ $(\tstruct, u) \in
  \good(\varphi)$ (resp.\ for $\bot$ and $\bad(\varphi)$).
\begin{proof}
  We prove the more general statement $M_{\varphi,v}(\tstruct, u) =
  \top \Rightarrow$ $(\tstruct, u) \in \good(\varphi,v)$, where
  $\varphi$ possibly has some free variables and $v$ is a valuation,
  by a nested induction over $\depth(\varphi)$.
  \begin{itemize}
  \item For the base case let $\depth(\varphi)=0$, where $\varphi$
    possibly has free variables, $(\tstruct, u)$ be an arbitrary but
    fixed prefix and $v$ a valuation. Suppose
    $M_{\varphi,v}(\tstruct,u)$ returns $\top$ after processing
    $(\tstruct, u)$, but $(\tstruct, u) \not\in \good(\varphi,v)$.  By
    \ref{algo:m:verdict} and \ref{itm:result}, the buffer of
    $T_{\lnot\varphi,v}$ is empty, i.e., $B_{\lnot\varphi,v}=
    \emptyset$.  By \ref{algo:r:buffer} and because
    $\cA_{\neg\varphi,v}$ has an accepting run $\rho$ over $(\tstruct,
    u)$ with some suffix, $B_{\neg\varphi,v}$ contains $(\rho(|u|),
    [\top])$ after processing $(\tstruct, u)$. Furthermore, because
    $\spawn$ yields $\top$ for any input iff $\depth(\lnot\varphi) =
    0$, no run in the buffer is ever removed in
    \ref{itm:rm_seq}. Contradiction.
  \item Let $depth(\varphi) > 0$, $(\tstruct, u)$ be an arbitrary but
    fixed prefix and $v$ a valuation.
    Under the same assumptions as above, we will reach a contradiction
    showing that after processing $(\tstruct, u)$, there is a sequence
    of obligations $(\rho(|u|), \allowbreak [obl_0, \ldots,
    \allowbreak obl_n])$ in buffer $B_{\lnot\varphi,v}$, which
    corresponds to an accepting run $\rho$ in $\cA_{\neg\varphi,v}$
    over $(\tstruct, u)$ with some suffix $(\tstruct',w')$. That is,
    M$_{\varphi,v}$ cannot return $\top$, after $B_{\lnot\varphi,v}$
    is empty, and $B_{\lnot\varphi,v}$ containing the above mentioned
    sequence at the same time.
    By \ref{algo:r:buffer}, $B_{\neg\varphi,v}$ contains a sequence
    $(\rho(|u|), \allowbreak [obl_0, \ldots, \allowbreak obl_n])$ that
    was incrementally created processing $(\tstruct, u)$ wrt.\
    $\ddelta$, eventually with some obligations removed if they were
    detected to be met by the input. 
    We now show that this sequence is never removed from the buffer in
    \ref{itm:rm_seq}.
    Suppose the run has been removed, then there was an
    $obl_j=\spawn(\rho(j), (\tstruct_j, u_j))$, that is
    \[
    \left( \bigwedge_{\forall \vx:p. \psi \in \rho(j)} \left(
        \bigwedge_{(p,\vd) \in u_j} \cA_{\psi,v'} \right) \right)
    \wedge \left( \bigwedge_{\neg\forall \vx:p. \psi \in \rho(j)}
      \left( \bigvee_{(p,\vd) \in u_j} \cA_{\neg\psi,v''} \right)
    \right),
    \]
    with $v'=v \cup \{\vx \mapsto \vd\}$ and $v''=v \cup \{\vx \mapsto
    \vd\}$, evaluated to $\bot$ after $l$ steps, with $0 \leq j \leq l
    < |u|$.
    That is, at least one submonitor corresponding to an automaton in
    the second conjunction has returned $\bot$ (or all submonitors
    corresponding to automata in a disjunction, for which the
    following argument would be similar).
    Wlog. let $\forall \vec{x}:p. \psi \in \rho(j)$, $(p,\vd) \in
    u_j$, and M$_{\psi,v'}(\struct_j, \ldots, \allowbreak \struct_l,
    u_j, \ldots, u_l)=\bot$, i.e., M$_{\psi,v'}$ is the submonitor
    corresponding to $\cA_{\psi,v'}$.
    As $level(\psi) < level(\varphi)$, from the induction hypothesis
    follows that $(\struct_j, \ldots, \struct_l, \allowbreak u_j,
    \ldots, u_l) \in bad(\psi,v')$, i.e., $(\struct_j, \ldots,
    \struct_l\tstruct'', u_j, \ldots, u_lw'') \models \psi$ with
    evaluation $v'$ for any $(\tstruct'', w'')$, and therefore
    $(\struct_j, \ldots, \struct_l\tstruct'', u_j, \ldots, u_lw'')
    \models \neg\forall x:p.\psi$ under valuation $v$.
    But as $\rho$ over $(\tstruct\tstruct', uw')$ is an accepting run
    in $\cA_{\neg\varphi,v}$ and $\forall x:p.\psi \in \rho(j)$, it
    follows that $(\tstruct^j\tstruct', u^jw') \models \forall
    x:p.\psi$. Now, we choose $(\tstruct'', w'')$ to be
    $(\tstruct_{l+1},\ldots,\tstruct_{|u|}\tstruct', \allowbreak
    u_{l+1},\ldots,u_{|u|}w')$. Contradiction.
  
    As for our second statement above, it can be shown similar as
    before. \qed
  \end{itemize}
\end{proof}



\end{document}